\title{A Simple Algorithm for Worst Case Optimal Join and Sampling} 
\titlerunning{A Simple Algorithm for WCO Join and Sampling}
\author{Florent Capelli}{%
Université d'Artois, CNRS, UMR 8188 - CRIL, F-62300 Lens, France
}{%
  florent.capelli@univ-artois.fr%
}{%
  https://orcid.org/0000-0002-2842-8223%
}{}
\author{Oliver Irwin}{%
  Université de Lille, CNRS, Inria, UMR 9189 - CRIStAL, F-59000 Lille, France%
}{%
  oliver.irwin@univ-lille.fr%
}{%
  https://orcid.org/0000-0002-8986-1506%
}{}
\author{Sylvain Salvati}{%
  Université de Lille, CNRS, Inria, UMR 9189 - CRIStAL, F-59000 Lille, France%
}{%
  sylvain.salvati@univ-lille.fr%
}{%
}{}
\authorrunning{F. Capelli, O. Irwin and S. Salvati} 
\keywords{join queries, worst-case optimality, uniform sampling} 
\NewDocumentCommand{\C}{}{\ensuremath{\mathcal{C}}}
\NewDocumentCommand{\CCard}{O{\tup[N]} O{H}}{\ensuremath{\C_{#2}(\leq #1)}}
\NewDocumentCommand{\CDeg}{O{DC} O{H}}{\ensuremath{\C_{#2}({#1})}} 
\NewDocumentCommand{\N}{}{\ensuremath{\mathbb{N}}}
\NewDocumentCommand{\tup}{O{x}}{\ensuremath{\mathbf{#1}}}
\NewDocumentCommand{\emptytuple}{}{\ensuremath{\langle \rangle}}
\NewDocumentCommand{\unittuple}{O{x} O{d}}{\ensuremath{\langle #1 \gets #2 \rangle}}
\NewDocumentCommand{\ans}{O{Q}}{\ensuremath{\mathsf{ans}(#1)}}
\NewDocumentCommand{\bin}{O{R} O{b}}{\ensuremath{\widetilde{#1}^{#2}}}
\NewDocumentCommand{\debin}{O{\cdot}}{\ensuremath{\overline{#1}}}
\NewDocumentCommand{\dsize}{O{Q}}{\ensuremath{\|#1\|}}
\NewDocumentCommand{\poly}{O{Q}}{\ensuremath{\mathsf{poly}}}
\NewDocumentCommand{\polylog}{O{Q}}{\ensuremath{\mathsf{polylog}}}
\NewDocumentCommand{\wc}{}{\ensuremath{\mathsf{wc}}}
\NewDocumentCommand{\up}{}{\ensuremath{\mathsf{upb}}}
\NewDocumentCommand{\qup}{}{\ensuremath{\mathsf{q\_upb}}}
\NewDocumentCommand{\agmup}{}{\ensuremath{\mathsf{agm\_upb}}}
\NewDocumentCommand{\pmup}{}{\ensuremath{\mathsf{pm\_upb}}}
\NewDocumentCommand{\eIf}{m m}{\State {\bf if} #1 {\bf then} #2}
\NewDocumentCommand{\eFor}{m m}{\State {\bf for} #1 {\bf do} #2}
\RenewDocumentCommand{\leq}{}{\leqslant}
\RenewDocumentCommand{\geq}{}{\geqslant}
\RenewDocumentCommand{\log}{}{\mathsf{log}}
\NewDocumentCommand{\WCOJ}{}{worst-case optimal join}
\NewDocumentCommand{\WCJ}{}{\ensuremath{\mathsf{WCJ}}}
\NewDocumentCommand{\size}{ m }{\ensuremath{\vert#1\vert}}
\NewDocumentCommand{\onel}{ O{T} }{\ensuremath{\mathsf{leaves_1}(#1)}}
\NewDocumentCommand{\Size}{ m }{\ensuremath{\big\vert#1\big\vert}}
\NewDocumentCommand{\linp}{ O{\mathcal{H}}}{\ensuremath{\mathsf{lin}(#1)}}
\NewDocumentCommand{\children}{}{\ensuremath{\mathsf{children}}}
\NewDocumentCommand{\UP}{}{\ensuremath{\mathsf{UP}}}
\NewDocumentCommand{\OUT}{}{\ensuremath{\mathsf{OUT}}}
\NewDocumentCommand{\bigo}{}{\ensuremath{\mathcal{O}}}
\NewDocumentCommand{\bigot}{}{\ensuremath{\mathcal{\tilde O}}}
\RenewDocumentCommand{\max}{}{\ensuremath{\mathsf{max}}}
\NewDocumentCommand{\flo}{ O{} m }{%
  \todo[linecolor=blue,
        backgroundcolor=blue!25,
        bordercolor=blue,
        tickmarkheight=0.15cm,
        #1]{
        \textbf{Florent : }#2}
}
\NewDocumentCommand{\oli}{ O{} m }{%
  \todo[linecolor=green,
  backgroundcolor=green!25,
  bordercolor=green,
  tickmarkheight=0.15cm,
  #1]{
    \textbf{o : }#2}
}
\NewDocumentCommand{\sylvain}{ O{} m }{%
  \todo[linecolor=orange,
  backgroundcolor=orange!25,
  bordercolor=orange,
  tickmarkheight=0.15cm,
  #1]{
    \textbf{Sylvain : }#2}
}
\begin{document}

\IfKnowledgeCompositionModeTF{
\knowledgestyle{kl}{color=black}
\knowledgestyle{notion}{color=black}
\knowledgestyle{intro notion}{color=black, emphasize}
}{}

\maketitle

\begin{abstract}
We present an elementary branch and bound algorithm with a simple analysis of why it achieves worstcase optimality for join queries on classes of databases defined respectively by cardinality or acyclic degree constraints. We then show that if one is given a reasonable way for recursively estimating upper bounds on the number of answers of the join queries, our algorithm can be turned into algorithm for uniformly sampling answers with expected running time $\bigot(\UP/\OUT)$ where $\UP$ is the upper bound, $\OUT$ is the actual number of answers and $\bigot(\cdot)$ ignores polylogarithmic factors. Our approach recovers recent results on worstcase optimal join algorithm and sampling in a modular, clean and elementary way.
\end{abstract}

\section{Introduction}
\label{sec:introduction}


\emph{Join queries} are expressions of the form \(Q \coloneq
R_1(\mathbf{x_1}),\dots,R_m(\mathbf{x_m})\), where every \(R_i\) is a
relation symbol and the \(\mathbf{x_i}\) is a tuple of variables over a set
\(X\). %
Evaluating join queries is a central task when answering database
queries. %
Since the combined complexity of deciding whether a given join query has at
least one answer on a given database is NP-complete~\cite{chandra77}, it is
unlikely that one can list all its answers in time linear in the number of
answers. %
An interesting line of research has been the design of so called worst case
optimal join (\emph{WCOJ}) algorithms. %
In this setting, for a given query \(Q\), we consider the worst possible
database among a class of instances, that is, the one where the number of
answers of \(Q\) is maximal. %
Now, even if we cannot find the answers of \(Q\) in time linear in the
number of answers of \(Q\), we can still aim at finding every answer in
time linear in the number of answers of the worst possible database in the
class. %
Such an algorithm will be said to be a WCOJ algorithm.

Consider for example the \emph{triangle query}, a query we will use multiple
times in this paper for illustration:

\begin{equation}
  \label{eq:q_delta}
  Q_\Delta \coloneq R(x_1, x_2), S(x_2, x_3), T(x_1, x_3)
\end{equation}

We assume that \(R, S, T\) are relations of size respectively \(N_R, N_S\)
and \(N_T\). %
It is not hard to see that \(Q_\Delta\) will have never more than \(N_R \times N_S \times
N_T\) answers. %
Even better, one can notice that the variables of \(R\) and \(S\) already
cover all variables of \(Q_\Delta\). %
Therefore, \(Q_\Delta\) cannot have more than \(N_R \times N_S\), and by symmetry, no
more than \(\min(N_RN_S, N_SN_T, N_RN_T)\). %
The work of Atserias, Grohe and Marx~\cite{atseriasSB2013} extends this notion
of covering all the variables to the idea of a \emph{fractional cover},
leading to an even better bound on the number of answers which is
\((N_RN_SN_T)^{1/2}\) and this bound is actually \emph{optimal} in the
sense that there exists an instance of \(Q_\Delta\) where \(R, S\) and
\(T\) have respectively sizes of at most \(N_R,N_S,N_T\) and
\(\bigot((N_RN_SN_T)^{1/2})\) answers where \(\bigot(\cdot)\) hides
polylogarithmic factors in the relation sizes and polynomial factors in the
query size, considered constant. %
Therefore, an algorithm able to compute the answers of \(Q_\Delta\) in time
\(\bigot((N_RN_SN_T)^{1/2})\) is a WCOJ algorithm for the class of
instances of \(Q_\Delta\) where \(R, S\) and \(T\) have sizes of at most
\(N_R,N_S,N_T\) respectively. %
It is optimal in the sense that it is linear in the size of the worst
possible instance of the class. %

In this simplified example, the class is defined via \emph{cardinality
  constraints}: we consider instances where each relation has a size (or
cardinality) that is bounded by a given integer. %
Building on the understanding of the worst case for such classes given
in~\cite{atseriasSB2013}, Ngo, Porat, R{\'{e}} and Rudra proposed the first
WCOJ algorithm for instances defined by cardinality constraints
in~\cite{NgoPRR12}. %
A simplified branch and bound algorithm, Triejoin, has been proposed by
Veldhuizen in~\cite{veldhuizen2014triejoin} and a more general version,
known as GenericJoin has been introduced by Ngo in~\cite{ngoWCOJ2018},
which is also worst case optimal on classes defined by so-called
\emph{acyclic degree constraints}, which is a strict generalisation. %
Since then, a fruitful line of research has focused in understanding worst
case bounds for classes of instances defined via more complex constraints
(e.g., functional dependencies or non acyclic degree constraints). %
A deep connection with information theory has been made
in~\cite{Khamis0S17} by Khamis, Ngo and Suciu, allowing the design of
PANDA, which can perform join queries in time that is not far from worst
case optimality, see~\cite{Suciu23} for an enlighting survey by Suciu on
this connection.

Another related line of research has focused on designing algorithms to
uniformly sample answers of join queries. %
One naive way of doing so is to first list \(\ans\) explicitly and then
uniformly sample an element of the list. %
Using a WCOJ algorithm, this gives a method allowing constant time sampling
after a preprocessing linear in the worst case. %
This complexity however does not match the intuition one could have of the
hardness of the problem. %
Indeed, it is reasonable to expect a query to be easier to sample if it has
many solutions, because, intuitively, they are easier to find. %
It turns out that this intuition can be turned into a formally proven
algorithm which achieves the following: for a class \(\C\) of queries
defined via cardinality constraints, Deng, Lu and Tao~\cite{dengJSH2023}
simultaneously with Kim, Ha, Flechter and Han~\cite{kimAGMOUT2023} proved
that one can achieve uniform sampling for a join query \(Q \in \C\) in time
of \(\bigot({|\wc(\C)| \over \max(1, |\ans|)})\) where \(\wc(\C)\) is the
worst case instance of class \(\C\) and \(\ans\) is the set of answers of
\(Q\). %
This result has recently been generalised to the case of acyclic degree
constraints by Wang and Tao~\cite{wangJSA2024}. %

\subparagraph*{Our contributions.} %
In this paper, we propose a very simple join algorithm with a very simple
analysis which achieves worst case optimality on classes of instances
defined by cardinality constraints and degree constraints. %
Our algorithm is a simple branch and bound algorithm which assigns one
variable to every possible value in the domain and backtracks whenever an
inconsistency is detected. %
As such, this can be seen as an extremely simplified version of
GenericJoin~\cite{ngoWCOJ2018} or TrieJoin~\cite{veldhuizen2014triejoin}. %
However, for these algorithms, a clever data structure is needed to branch
only on relevant values. %
This is actually necessary since this naive branch and bound algorithm is
not really worst case optimal. %
Indeed, on a query on domain \(D\), an extra factor of \(\size{D}\) appears
in the complexity. %
We turn it into a WCOJ algorithm with a simple trick: instead of branching
directly on domain values, we branch on the values bit by bit. %
An illustration of our algorithm is given in \cref{fig:example_dpll} for the
triangle query \(Q_\Delta\), defined in \cref{eq:q_delta} with tables given
in \cref{tab:example_db}. %
On the left, we show the branch and bound algorithm where values of
\(x_1,\dots,x_3\) are iteratively set to values in the domain
\(\{0,1,2,3\}\). %
Whenever a relation is inconsistent with the current partial assignment, we
backtrack, which is represented by \(\bot\) in the tree. %
Observe on the example that after setting \(x_1\) to \(0\), we explore many
``useless'' values for \(x_2\) that directly give an inconsistency. %
In this simple example, we can directly read from \(R\) that only the value
\(x_2 \mapsto 0\) is relevant, but in some more complex queries, we may
need to compute more complex intersections efficiently, which is exactly
how GenericJoin and TrieJoin address the problem. %

To avoid the need for such a data structure, we slightly modify the
algorithm as shown on the right part of \cref{fig:example_dpll}. %
We encode the domain \(\{0,1,2,3\}\) with two bits on \(\{0,1\}^2\) and now
branch on the first bit \(x_1^1\) of \(x_1\) and then on the second bit
\(x_1^2\) of \(x_1\) and so on. %
We can directly see on the example that when \(x_1\) is set to \(0\), that
is, when \(x_1^1 \mapsto 0, x_1^2 \mapsto 0\), then we do not explore the
values \(\{1,3\}\) for \(x_2\) as we directly detect an inconsistency when
setting the first bit of \(x_2\) to \(1\). %
This simple trick is enough to guarantee worst case optimality of a simple
branch and bound algorithm. %

\begin{table}[htp]
  \centering
  \begin{tabular}{>{\sffamily\bfseries}c|>{\sffamily}c>{\sffamily}cm{1.5cm}>{\sffamily\bfseries}c|>{\sffamily}c>{\sffamily}cm{1.5cm}>{\sffamily\bfseries}c|>{\sffamily}c>{\sffamily}c}
R & \(\mathsf{x_1}\) & \(\mathsf{x_2}\) & &    S & \(\mathsf{x_2}\) & \(\mathsf{x_3}\) & & T & \(\mathsf{x_1}\) & \(\mathsf{x_3}\)   \\ \cline{1-3} \cline{5-7} \cline{9-11}
  & 0   & 0   & &      & 0   & 2   & &   & 0   & 3   \\
  & 1   & 0   & &      & 0   & 3   & &   & 1   & 0   \\
  & 1   & 1   & &      & 1   & 0   & &   & 1   & 2   \\
  & 2   & 1   & &      & 1   & 2   & &   & 2   & 3   \\
  \end{tabular}
  \caption{An instance of \(Q_\Delta\) on domain \(\{0,1,2,3\}\).}
  \label{tab:example_db}
\end{table}

\begin{figure}[htp]
  \centering
  \begin{subfigure}[b]{8cm}
    \includegraphics[width=8cm]{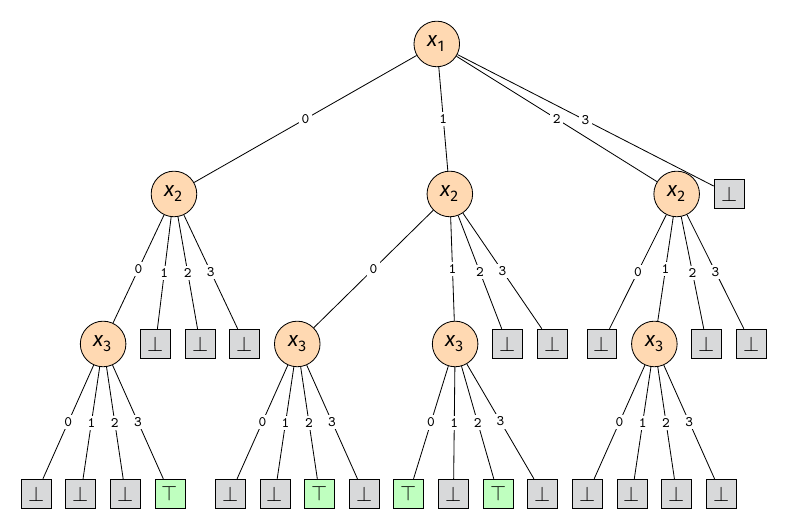}
    \caption{Over \(4\)-valued domain}
  \end{subfigure}
  \hfill
  \begin{subfigure}[b]{5cm}
    \includegraphics[width=5cm]{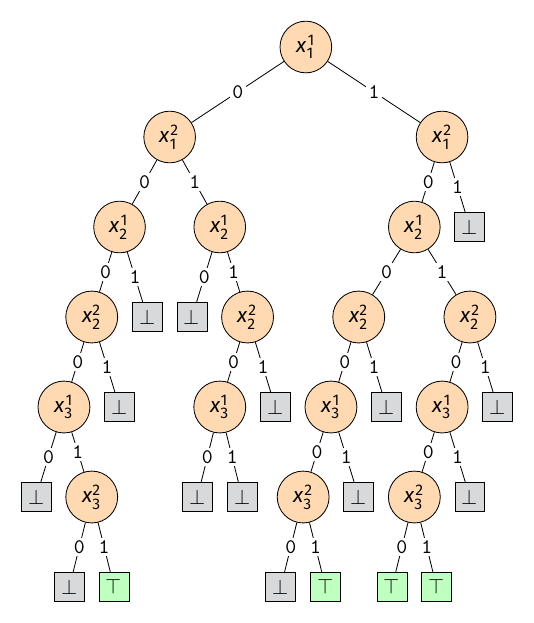}
    \caption{Binarised version}
  \end{subfigure}
  \caption{Trace of an execution of our algorithm over the triangle query
    \(Q_\Delta\) defined in \cref{tab:example_db}. Satisfying assignments
    are labelled \(\top\) and shown with a greenish node.On the right side,
    the same algorithm branching on the bits of the values instead of the
    values themselves.}
  \label{fig:example_dpll}
\end{figure}

One strength of our result is that worst case optimality is proven without
any knowledge of the actual worst case of the class. %
This is in contrast with most existing WCOJ algorithms, whose analysis
often relies on how the worst case value is computed. %
This knowledge is infused into the proof and, sometimes even, in the
algorithm itself, for example in the first NPPR
algorithm~\cite{NgoPRR12}. %
Our analysis only exploits one property that we call \kl{\emph{prefix
    closedness}}. %
A class of instances \(\C\) is intuitively \kl{prefix closed} if for every
\(Q \in \C\), the number of answers of \(Q\) where we have removed some
variables never exceeds the worst case \(\wc(\C)\). %
This property is straightforward to establish for classes defined with
cardinality constraints and with acyclic degree constraints resulting in an
elementary proof of worst case optimality. %

The second contribution of this paper is to show how uniformly sampling
answers can be achieved in expected runtime of \(\bigot({\wc(\C) \over
  \max(1,|\ans|)})\) for classes defined with cardinality constraints and
with acyclic degree constraints, matching the complexity established in
previous work with more involved
techniques~\cite{wangJSA2024,dengJSH2023,kimAGMOUT2023}. %
Our approach is elementary. %
Intuitively, we see the trace of our WCOJ algorithm as a tree whose leaves
are either conflicts or solutions. %
The sampling problem hence reduces to uniformly sampling ``interesting''
leaves in a tree, without fully exploring it. %
It turns out that this is easy to do by adapting an algorithm from
Rosenbaum~\cite{rosenbaumSLTEP1993} as long as one has a way of
overestimating the number of interesting leaves in each subtree. %
We show that this can be done for join queries using the knowledge we have
on how to compute worst case bounds. %
The only technical blackbox we use to establish this result is (a weak form
of) Friedgut's inequality~\cite{friedgut04}. %
In particular, we recover the recent result from~\cite{wangJSA2024} on
sampling join queries under acyclic degree constraints with an elementary
proof. %

\subparagraph*{Organisation of the paper.} %
We give some necessary notations and preliminaries in
\cref{sec:preliminaries}. %
\cref{sec:dpll} contains the description of our branch and bound algorithm
and a simple analysis of its complexity. %
We then show that this is enough to establish worst case optimality for
classes defined with cardinality constraints and acyclic degree constraints
in \cref{sec:wc}. %
Finally, \cref{sec:sampling} shows that the branch and bound algorithm can
easily be turned into a sampling algorithm achieving the same complexity as
previous work in a simpler and more modular way. %


\section{Preliminaries}
\label{sec:preliminaries}

\subparagraph{Notations.}
\label{ssec:notations}

We assume the reader familiar with the basic vocabulary of database theory
and mostly introduce notations in this section. %
Given two sets \(X\) and \(D\), we denote by \(D^X\) the set of
\AP\intro{tuples} over variables \(X\) and domain \(D\), that is, the set
of mappings from \(X\) to \(D\). %
We denote by \(\emptytuple\) the empty tuple, that is, the only element of
\(D^\emptyset\) and by \(\unittuple\) the tuple on variable \(\{x\}\) that
maps \(x\) to \(d\). %
For \(\tau \in D^X\) and \(\sigma \in D^Y\) with \(X \cap Y = \emptyset\), we denote by \(\tau \cup
\sigma\) the tuple mapping \(z \in X \cup Y\) to \(\tau(z)\) if \(z \in X\)
and \(\sigma(z)\) otherwise. %

A \AP\intro{relation} \(R\) is a subset of \(D^X\). %
Given \(\tau \in D^X\) and \(Y \subseteq X\), we denote by \(\tau_{|Y}\) the restriction of
\(\tau\) to \(Y\), that is, the tuple such that \(\tau_{|Y}(y) = \tau(y)\)
for every \(y \in Y\). %
For \(R \subseteq D^X\), we write \(R_{|Y}\) for \(\{\tau_{|Y} \mid \tau \in
R\}\). %
From now on when mentioning a relation, for example \(R\), we assume that
\(X_R\) is the set of variables on which it is defined, i.e. \(R\subseteq
D^{X_R}\). %
Let \(\tau \in D^Y\), we denote by \(R[\tau]\) the relation \(\{\sigma_{|X_R - Y} \mid \sigma \in
R,\, \sigma_{|Y} = \tau_{|X_R}\}\). %
That is \(R[\tau]\) is obtained by filtering out every \kl{tuple} of \(R\)
that do not agree with \(\tau\) on the variables \(Y\). %

A \AP\intro{join query} \(Q\) over variables \(X\) and domain \(D\) is a
set of relations such that, for every \(R \in Q\), \(R \subseteq D^{X_R}\)
for some \(X_R \subseteq X\). %
Observe that, as it is often done in the literature about WCOJ algorithms,
we slightly deviate from the usual database setting which separates the
data from the query. %
We can still see a join query as a usual full conjunctive query with
hypergraph \((X, \{X_R \mid R \in Q\})\) and the data, that is, the tuples
of each \(R \in Q\). %

The \AP\intro{answer set of \(Q\)}, denoted by \(\ans\), is defined as the
set of tuples \(\tau \in D^X\) such that for every \(R \in Q\),
\(\tau_{|X_R} \in R\). %
The \AP\intro{join problem} is the problem of outputing \(\ans\) given
\(Q\) as input. %
The \AP\intro{data size} of \(Q\), denoted as \(\dsize\), is defined as the
number of tuples in its relations, \(\dsize = \sum_{R \in Q} |R|\). %
Given \(Y \subseteq X\), we denote by \(Q_{|Y}\) the join query defined as \(\{R_{|Y
  \cap X_R} \mid R \in Q\}\). %
Given \(\tau \in D^Y\), we define \(Q[\tau] = \{R[\tau]\mid R \in Q\}\). %
We say that \(\tau\) is \AP\intro{inconsistent with \(Q\)} if \(Q[\tau]\)
contains an empty relation; \(\tau\) is otherwise \AP\intro{consistent with
  \(Q\)}. %
We make the following observation that will be crucial for the rest of this
paper: %

\begin{lemma}
  \label{lem:incpref}
  For every \(\tau \in D^Y\), \(\tau \in \ans[Q_{|Y}]\) iff \(\tau\) is consistent with
  \(Q\).
\end{lemma}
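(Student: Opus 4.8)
The plan is to prove the equivalence purely by unfolding the two definitions involved and checking that, relation by relation, they express the very same condition. No external result is needed: the content of the lemma is simply that the notion of consistency used to drive the branch-and-bound algorithm coincides with membership in the answer set of the projected query. I would establish the biconditional by transforming each side into a common intermediate condition, so that both directions of the ``iff'' are handled simultaneously through a chain of equivalences.

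First I would rewrite the left-hand side. By the definition of the answer set applied to \(Q_{|Y} = \{R_{|Y \cap X_R} \mid R \in Q\}\), the statement \(\tau \in \ans[Q_{|Y}]\) unfolds to: for every \(R \in Q\), the restriction \(\tau_{|Y \cap X_R}\) belongs to the projected relation \(R_{|Y \cap X_R}\). Since \(R_{|Y \cap X_R} = \{\sigma_{|Y \cap X_R} \mid \sigma \in R\}\), this membership is by definition equivalent to the existence of a tuple \(\sigma \in R\) that agrees with \(\tau\) on the shared variables, i.e. \(\sigma_{|Y \cap X_R} = \tau_{|Y \cap X_R}\).

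Next I would rewrite the right-hand side. By definition \(\tau\) is consistent with \(Q\) exactly when \(Q[\tau]\) contains no empty relation, that is, when \(R[\tau] \neq \emptyset\) for every \(R \in Q\). Unfolding \(R[\tau] = \{\sigma_{|X_R - Y} \mid \sigma \in R,\ \sigma_{|Y \cap X_R} = \tau_{|Y \cap X_R}\}\), I would observe that this set is nonempty precisely when some \(\sigma \in R\) satisfies the filtering condition \(\sigma_{|Y \cap X_R} = \tau_{|Y \cap X_R}\); the projection onto \(X_R - Y\) always yields a legitimate tuple and so cannot itself be responsible for emptiness. Hence whether \(R[\tau]\) is empty is governed solely by the existence of a matching \(\sigma\).

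Putting the two halves together, both sides reduce to the identical per-relation condition — for every \(R \in Q\) there exists \(\sigma \in R\) with \(\sigma_{|Y \cap X_R} = \tau_{|Y \cap X_R}\) — which closes the equivalence. The only point demanding care, and the nearest thing to an obstacle here, is the bookkeeping of index sets: in the definition of \(R[\tau]\) the subscripts \(\sigma_{|Y}\) and \(\tau_{|X_R}\) must be read as restrictions to the common variables \(Y \cap X_R\) (since \(\sigma\) lives on \(X_R\) and \(\tau\) on \(Y\)), and I would make sure these intersections are matched consistently against the projection \(R_{|Y \cap X_R}\) appearing on the other side before declaring the two conditions the same.
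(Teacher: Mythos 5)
Your proof is correct and follows essentially the same route as the paper's: both arguments are a direct unfolding of the definitions, reducing each side to the condition that for every \(R \in Q\) some \(\sigma \in R\) agrees with \(\tau\) on \(Y \cap X_R\) (the paper phrases this as two separate implications, you as a chain of equivalences, which is an immaterial difference). Your closing remark on reading \(\sigma_{|Y} = \tau_{|X_R}\) as agreement on \(Y \cap X_R\) is exactly the right reading of the paper's notation.
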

\begin{proof}
  It is simply a reformulation: if \(\tau \in \ans[Q_{|Y}]\) then it means that
  for every \(R \in Q\), \(\tau_{|X_R \cap Y} \in R_{|Y}\). %
  In particular, \(R[\tau]\) is not empty. %
  Hence \(\tau\) is consistent with \(Q\). %
  Conversely, if \(\tau\) is \kl{consistent} with \(Q\), then for every \(R \in
  Q\), \(R[\tau]\) is not empty. %
  That is, there exists some tuple \(\sigma \in R\) such that \(\sigma_{|Y} = \tau_{|X_R \cap
    Y}\). %
  In other words, \(\tau_{|X_R \cap Y} \in R_{|Y}\) for every \(R \in Q\), hence,
  \(\tau \in \ans[Q_{|Y}]\).
\end{proof}

In this paper, we will always make the assumption that the domain \(D\) of
a join query is its \emph{active domain}, that is, the set of values that
appear in at least one relation. %
Moreover, we assume that every value in this active domain is encoded with
\(\bigo(\log\size{D})\) bits. %
While this is a reasonable assumption, it may not be completely realistic
in practice (for example when using string values). %
We can still enforce this condition with linear preprocessing by reencoding
the domain using a perfect hash function~\cite{czech1997perfect}. %

\subparagraph{Worst-case optimal join.}
\label{ssec:wcoj}

In this section, we give an abstract definition of what we call a \WCOJ{}
algorithm. %
Let \(H = (X,E)\) be a hypergraph and \(\C\) be a class of join queries
with hypergraph \(H\). %
We define the \AP\intro{worst case of \(\C\)}, denoted by \(\wc(\C)\) as
\wc\((\C) = \sup_{Q \in \C} \size{\ans}\). %
An algorithm is a \emph{\WCOJ{} for \(\C\)} if, on input \(Q \in \C\), it outputs
\(\ans\) in time \(\bigot(\wc(\C) \times \poly(n,m))\), where
\(\bigot(\cdot)\) hides polylog factors, \(m = \size{E}\) and \(n =
\size{X}\) are parameters that only depend on the structure of the query
and not on the content of the relations. %
Of course, for this definition to make sense, one needs \(\wc(\C)\) to be
finite. %
Many such classes have been studied in the literature and many \WCOJ{}
algorithms have been proposed. %
In this paper, we will focus on the two main classes that have been
considered: classes defined via \kl{cardinality constraints} and classes
defined via \kl{degree constraints}.

\subparagraph{Cardinality Constraints.} %

Let \(H=(X,E)\) be a hypergraph verifying \(\bigcup E = X\) (every node is
covered by a hyperedge) and let \(\tup[N] \in \N^E\). %
We let \(\CCard\) be the class of join queries \(Q\) on hypergraph \(H\)
such that for every \(e \in E\), there is \(R_e\) in \(Q\) such that
\(X_{R_e} = e\) and \(\size{R_e} \leq \tup[N](e)\). %
We say that \(\CCard\) is a class defined via \AP\intro{cardinality
  constraints} because it puts a bound on the cardinality of (the
intersection of) the involved relations. %
Clearly, \(\wc(\CCard) \leq \prod_{e \in E} \tup[N](e) < \infty\). %
Actually, one can get a sharper, almost optimal upper bound on
\(\wc(\CCard)\) using a result by Grohe and Marx~\cite{groheCSF2014}
(optimality was proven by Atserias, Grohe and Marx in
\cite{atseriasSB2013}) and which has later been known as the AGM bound. %
For example, one can show that for the triangle hypergraph \(H_\Delta =
\{e_1,e_2,e_3\}\) where \(e_1 = \{1,2\}\), \(e_2 = \{2,3\}\) and \(e_3 =
\{1,3\}\), \(\wc(\CCard[\tup[N]][H_\Delta]) \leq
\sqrt{N(e_1)N(e_2)N(e_3)}\). %
We delay the precise presentation of such bounds to \cref{sec:sampling}
where we are interested in sampling answer from conjunctive queries. %
One strength of our worst case optimal join approach compared to previous
work is that we do not need to have an understanding of the worst case
bound to prove its worst case optimality. %

\subparagraph{Degree constraints.}

Another class of join queries which received attention in the literature on
\WCOJ{}s is the class of queries defined with degree constraints. %
Given two sets \(A \subseteq B\), a \intro{degree constraint} is a triplet of the
form \((A,B,N_{B|A})\) with \(N_{B|A} > 0\). %
A relation \(R_e\) on variables \(e \supseteq B\) respects the degree constraint
\((A,B,N_{B|A})\) if and only if \(\max_{\tau \in D^A} |R[\tau]_{|B}| \leq
N_{B|A}\). %
It is a generalisation of cardinality constraints since a cardinality
constraint can be seen as a degree constraint of the form \((\emptyset, B,
N_B)\). %
It can also be seen as a generalisation of functional dependencies since a
functional dependency \(X \rightarrow y\) can be seen as the \((X, \{y\},
1)\) \kl{degree constraint}. %
Let \(H=(X,E)\) be a hypergraph and \(DC\) be a set of degree constraints
of the form \((A,B,N_{B|A})\) with \(A \subseteq B \subseteq X\). %
Each degree constraint \(\delta \in DC\) is associated with an hyperedge
\(e_\delta \in E\) with \(e_\delta \supseteq B\) which guards it. %
We let \(\CDeg\) be the class of queries \(Q\) on hypergraph \(H\) such
that for every \(\delta=(A,B,N_{B|A}) \in DC\), there is an atom \(R\) in
\(Q\) such that \(X_R = e_\delta\) and \(R\) respects \(\delta\). %

Observe that it may happen that \(\wc(\CDeg) = +\infty\). %
In this paper, we are only interested in classes where this does not
happen. %
This is often enforced by assuming that \(\bigcup E = X\) and that for every \(e
\in H\), at least one constraint in \(DC\) is a cardinality constraint of
the form \((\emptyset, e, N_e)\) with guard \(e\) that has hence to be
respected by a relation \(R\) with \(X_R = e\). %
In this case, as before, \(\wc(\CDeg) \leq \prod_e N_e < +\infty\). %
Here again, more precise upper bounds are known on \(\wc(\CDeg)\) but they
will not be necessary for our worst case optimal join algorithm and we
delay this discussion to \cref{sec:sampling} where we will need them. %


\section{Branch and bound algorithm for join queries}
\label{sec:dpll}

In this section, we propose a simple branch and bound algorithm to compute
join queries and provide an easy upper bound on its complexity. %
We will show later how this upper bound can be proved to be worst case
optimal for some classes of instances. %
The algorithm can be seen as an instance of GenericJoin from~\cite{NgoRR13} but
it is given in an extremely simple form and its analysis is elementary. %
Written in this way, the algorithm is not worst case optimal but a simple
algorithmic trick will allow us to recover known results, presented in
\cref{sec:wc}. %

The algorithm, whose pseudo code is given in \cref{alg:wcj}, is a simple
recursive search: assume a fixed order \((x_1,\dots,x_n)\) is given on
variables \(X\). %
We find the answers of \(Q\) by setting variables sequentially according to
this order, trying each possible value in the domain. %
Whenever the current partial assignment is inconsistent with \(Q\), it is
not further expanded. %
If every variable is assigned and the assignment is consistent with \(Q\),
then it is output. %

\begin{algorithm}[htp]
  \caption[]{An algorithm to compute join queries}
  \label{alg:wcj}
  \begin{algorithmic}[1]
    \Procedure{\(\WCJ\)}{$Q,\tau$}
    \eIf{\(Q[\tau]\) contains an empty relation}{\Return}
    \State \(i \gets \) last variable assigned by \(\tau\);
    \eIf{\(i = n\)}{output \(\tau\). \Return}
    \eFor{\(d \in D\)}{\(\WCJ(Q, \tau \cup \unittuple[x_{i+1}][d])\)}
    \EndProcedure
  \end{algorithmic}
\end{algorithm}

\subparagraph*{Correction of the algorithm.}

Starting with a call \(\WCJ(Q, \emptytuple)\), every recursive call is of
the form \(\WCJ(Q, \tau)\) where \(\tau\) is a tuple in \(D^{X_i}\) where
\(X_i := \{x_1,\dots,x_i\}\). %
We claim that for every \(\tau\) which assigns variables \(X_i\), then
\(\WCJ(Q,\tau)\) outputs \(\tau \cup \sigma\) for every answer \(\sigma\)
of \(Q[\tau]\). %
The proof is by induction on \(i\). %
If \(i=n\), then \(\tau\) is output if and only if \(Q[\tau]\) does not contain
the empty relation, which by \cref{lem:incpref} means that \(\tau\) is an
answer of \(Q\). %
Now assume \(i < n\). %
If \(\tau\) is \kl{inconsistent} with \(Q\) then nothing is output, this is
coherent with our induction hypothesis since \(Q[\tau]\) contains an empty
relation, meaning that any tuple \(\sigma\) so that \(\sigma_{|X_i} =
\tau\) is not in \(\ans\). %
Otherwise, by induction, \(\WCJ(Q,\tau \cup \unittuple[x_{i+1}][d])\) outputs
\(\tau \cup \unittuple[x_{i+1}][d] \cup \sigma\) for every \(\sigma
\in\ans[Q[\tau\cup \unittuple[x_{i+1}][d]]]\), that is, for every \(\sigma
\in \ans[Q[\tau]]\). %
It completes the induction and it directly follows that
\(\WCJ(Q,\emptytuple)\) outputs \(\ans[Q]\). %

\subparagraph*{Number of recursive calls.}

We claim that \cref{alg:wcj} does at most \((1 + \size{D}) \cdot \sum_{i \leq n}
\size{\ans[Q_{|X_i}]}\) recursive calls. %
Indeed, as stated before, every recursive call is of the form \(\WCJ(Q,\tau)\)
where \(\tau\) is a tuple of \(D^{X_i}\). %
In the first case, assume that \(Q\) is \kl{consistent} with \(\tau\), which
means in particular that \(\tau\) is in \(\ans[Q_{|X_i}]\) by
\cref{lem:incpref}. %
Hence, there are at most \(\sum_{i \leq n} \size{\ans[Q_{|X_i}]}\) recursive calls
of this type. %
In the second case, assume that \(\tau\) is \kl{inconsistent} with \(Q\). %
Then the recursive call with parameters \((Q,\tau)\) has been issued from a
call of the form \((Q, \tau')\) where \(\tau = \tau' \cup
\unittuple[x_i][d]\) for some \(d \in D\). %
In particular, \(\tau'\) is consistent with \(Q\), otherwise such a recursive
call would not have happened. %
Hence, \(\tau' \in \ans[Q_{|X_{i-1}}]\) and there are at most \(\size{D}\)
possible \(\tau\) for a given \(\tau' \in \ans[Q_{|X_{i-1}}]\). %
Therefore, there are at most \(\size{D} \cdot \sum_{i \leq n} \size{\ans[Q_{|X_i}]}\)
recursive calls of this form, this in total, \((\size{D} + 1)\sum_{i \leq
  n} \size{\ans[Q_{|X_i}]}\) recursive calls. %

\subparagraph*{Efficient implementation.}

Now we explain how, using a very simple data structure, one can assume that
each recursive call is executed in \(\bigot(m)\) where \(m\) is the number
of atoms in \(Q\). %
The only non trivial thing is to check whether \(Q[\tau]\) contains an empty
relation. %
To do that, we simply assume that every relation is given sorted in
lexicographical order, for the attribute order \(x_1, \dots, x_n\). %
This could be obtained via a preprocessing that is quasi linear in the data
(or linear in the RAM model, but since we ignore polylogarithmic factors,
it does not matter much). %
Now observe that if \(R\) is a relation of \(Q\) and \(\tau\) a tuple in
\(D^{X_i}\), then all tuples from \(R[\tau]\) are consecutively stored in
the table. %
Hence we can represent \(R[\tau]\) by keeping two pointers \(p_1,p_2\) on the
tuples of \(R\): one towards the first tuple and one towards the last tuple
in \(R[\tau]\). %
To check whether \(R[\tau]\) is \kl{consistent}, it is enough to check that
\(p_1 \leq p_2\). %
To go from the representation of \(R[\tau]\) to the representation of \(R[\tau
\cup \unittuple[x_{i+1}][d]]\), we simply need to find the first and last
tuple between \(p_1\) and \(p_2\) where \(x_{i+1} = d\). %
This can be done via a binary search in time \(\bigo(\log\size{R})\). %
Hence, each recursive join can be executed in time \(\bigo(m \log\dsize\),
that is, \(\bigot(m)\). %
A slightly more involved data structure would allow us to compute in time
\(\bigo(m)\) by representing \(R\) as a trie as
in~\cite{veldhuizen2014triejoin}. %
We just proved: %

\begin{theorem}
  \label{thm:dpll_complexity}
  Given a join query \(Q\) on domain \(D\) with \(m\) atoms and
  \((x_1,\dots,x_n)\) an order on the variables of \(Q\),
  \(\WCJ(Q,\emptytuple)\) computes \(\ans[Q]\) in time \(\bigot(m|D| \cdot
  \sum_{i \leq n} |\ans[Q_{|X_i}]|)\), where \(X_i=\{x_1,\dots,x_i\}\). %
\end{theorem}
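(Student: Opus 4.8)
The plan is to assemble the three facts established in the discussion just above the statement into a single running-time bound. Correctness is already in hand from the induction on $i$ showing that $\WCJ(Q,\emptytuple)$ outputs exactly $\ans[Q]$, so what remains is purely the time accounting, and this factors cleanly as (number of recursive calls) times (cost of a single call). First I would make this factorisation explicit and then plug in the two bounds.

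For the number of recursive calls, I would classify each call $\WCJ(Q,\tau)$, with $\tau \in D^{X_i}$, according to whether $\tau$ is \kl{consistent} with $Q$. The key observation is \cref{lem:incpref}: a consistent $\tau$ is precisely an element of $\ans[Q_{|X_i}]$, so the consistent calls number at most $\sum_{i \leq n} \size{\ans[Q_{|X_i}]}$. An inconsistent call on some $\tau \in D^{X_i}$ can only have been spawned by its parent call on $\tau' = \tau_{|X_{i-1}}$, which must itself be consistent and hence lies in $\ans[Q_{|X_{i-1}}]$; since each such parent spawns at most $\size{D}$ children, the inconsistent calls number at most $\size{D} \cdot \sum_{i \leq n} \size{\ans[Q_{|X_i}]}$. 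Summing the two cases bounds the total number of calls by $(1+\size{D})\sum_{i \leq n} \size{\ans[Q_{|X_i}]}$.

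For the per-call cost, I would invoke the efficient-implementation argument: after a quasilinear preprocessing sorting every relation lexicographically for the order $x_1,\dots,x_n$, each $R[\tau]$ occupies a contiguous block, is represented by two pointers, and the emptiness test reduces to comparing them. Refining $\tau$ by one variable is a binary search costing $\bigo(\log\dsize)$ per relation, hence $\bigo(m\log\dsize) = \bigot(m)$ per call. Multiplying the two bounds then yields the claimed running time $\bigot(m\size{D} \cdot \sum_{i \leq n} \size{\ans[Q_{|X_i}]})$.

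I expect the counting of recursive calls to be the only genuinely delicate step, and within it the treatment of the inconsistent (backtracking) calls: a failed branch cannot be charged to any answer of a projected query directly, so one charges it instead to its consistent parent, and this is exactly what produces the stray factor $\size{D}$. That factor is precisely the inefficiency removed later by the binarisation trick of \cref{sec:wc}; everything else is routine bookkeeping.
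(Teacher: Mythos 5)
Your proposal is correct and follows essentially the same route as the paper: the same split into correctness (by the induction already done), a count of recursive calls separated into consistent calls (charged to $\ans[Q_{|X_i}]$ via \cref{lem:incpref}) and inconsistent calls (charged to their consistent parent, yielding the $\size{D}$ factor), and the same per-call cost of $\bigot(m)$ via sorted relations, contiguous blocks, two pointers, and binary search. Nothing is missing; your closing remark correctly identifies the $\size{D}$ factor as the sole source of non-optimality later removed by binarisation.
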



\section{Worstcase optimality}
\label{sec:wc}

\subsection{Prefix closed classes}

To show that \cref{alg:wcj} is worst case optimal on a class \(\C\) of
instances, we need to bound \cref{thm:dpll_complexity} by
\(\bigot(\wc(\C))\). %
Of course, this will not be true for any class of instances but it turns
out that we can easily do so on classes defined by cardinality constraints
or by acyclic degree constaints. %
\cref{thm:dpll_complexity} motivates the following definition: a class
\(\C\) is \AP\intro{prefix closed for the order \(\pi = (x_1,\dots, x_n)\)}
if and only if for every \(i \leq n\) and \(Q \in \C\),
\(\size{\ans[Q_{|X_i}]} \leq \wc(\C)\). %
Indeed, if \(\C\) is \kl{prefix closed} for an order \(\pi\), then computing
\(\ans\) for \(Q \in \C\) using \cref{alg:wcj} with order \(\pi\) will take
\(\bigot(mn \cdot \size{D} \cdot \wc(\C))\), where \(D\) is the domain of
\(Q\). %

\begin{theorem}
  \label{thm:wcjcomplexity}
  For every class \(\C\) that is \kl{prefix closed} for an order
  \((x_1,\dots,x_n)\) and join query \(Q \in \C\) with \(n\) variables and \(m\)
  relations, \(\WCJ(Q,\emptytuple)\) returns \(\ans\) in time \(\bigot(nm
  \cdot \size{D} \cdot \wc(\C))\). %
\end{theorem}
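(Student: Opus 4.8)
The plan is to derive the bound directly from \cref{thm:dpll_complexity} together with the definition of \kl{prefix closedness}, so the argument is essentially a substitution. First I would observe that running \(\WCJ(Q,\emptytuple)\) with the specific order \((x_1,\dots,x_n)\) for which \(\C\) is \kl{prefix closed} is exactly the setting of \cref{thm:dpll_complexity}, which gives a running time of \(\bigot(m\size{D} \cdot \sum_{i \leq n} \size{\ans[Q_{|X_i}]})\), and which simultaneously guarantees that the output is exactly \(\ans\).

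Next, I would invoke the hypothesis that \(\C\) is \kl{prefix closed} for the order \((x_1,\dots,x_n)\): by definition, for every \(i \leq n\) we have \(\size{\ans[Q_{|X_i}]} \leq \wc(\C)\). Summing these \(n\) inequalities yields \(\sum_{i \leq n} \size{\ans[Q_{|X_i}]} \leq n \cdot \wc(\C)\). Substituting this into the complexity from \cref{thm:dpll_complexity} gives a running time of \(\bigot(m\size{D} \cdot n \cdot \wc(\C)) = \bigot(nm \cdot \size{D} \cdot \wc(\C))\), which is the claimed bound. Correctness of the output is inherited directly from the correctness argument already established for \cref{alg:wcj}.

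I do not expect any genuine obstacle here: all the real work has already been done, either in proving \cref{thm:dpll_complexity} (bounding the per-call cost and counting the recursive calls) or in isolating the single structural property of \kl{prefix closedness} that makes the intermediate projection counts collapse to \(\wc(\C)\). The only point requiring mild care is that the order used to run the algorithm must coincide with the order witnessing prefix closedness, since the inequality \(\size{\ans[Q_{|X_i}]} \leq \wc(\C)\) is order-dependent: for an arbitrary variable order the intermediate projections \(Q_{|X_i}\) need not stay below \(\wc(\C)\), so the hypothesis cannot be applied and the telescoping of the sum would fail.
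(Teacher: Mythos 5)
Your proposal is correct and matches the paper's argument exactly: the paper derives \cref{thm:wcjcomplexity} in the sentence preceding its statement by plugging the prefix-closedness bound \(\size{\ans[Q_{|X_i}]} \leq \wc(\C)\) into the running time \(\bigot(m\size{D} \cdot \sum_{i \leq n} \size{\ans[Q_{|X_i}]})\) of \cref{thm:dpll_complexity}, just as you do. Your closing remark that the algorithm's order must coincide with the order witnessing prefix closedness is also the right point of care, and consistent with the paper's later discussion of order-compatibility for degree constraints.
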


While \(m\) and \(n\) are considered constant in our setting, we cannot
assume so for \(\size{D}\). %
Hence, \cref{thm:dpll_complexity} and \kl{prefix closedness} will not be
enough to establish worst case optimality of \cref{alg:wcj}. %
That being said, we present a simple trick in \cref{sec:binarisation} which
allows us to circumvent this issue easily. %
The main classes for which worst case optimal algorithms are known are
prefix closed, at least for one order. %
Even if cardinality constraints are less general than degree constraints,
we start by showing it for the former as a warmup, even if the proof is
essentially the same for the latter: %

\begin{theorem}
  \label{thm:card-prefix-closed}
  Let \(\CCard\) be a class of join queries defined for hypergraph
  \(H=(X,E)\) and cardinality constraints \(\tup[N] \subseteq \N^E\). %
  Then \(\CCard\) is \kl{prefix closed} for every order. %
\end{theorem}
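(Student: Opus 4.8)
The plan is to prove the statement directly from the definition: for an arbitrary order $\pi = (x_1,\dots,x_n)$, an arbitrary $i \leq n$ and an arbitrary $Q \in \CCard$, I would show $\size{\ans[Q_{|X_i}]} \leq \wc(\CCard)$. Since $\wc(\CCard)$ is by definition the supremum of $\size{\ans[Q']}$ over all $Q' \in \CCard$, it is enough to exhibit one query $Q' \in \CCard$ with $\size{\ans[Q']} = \size{\ans[Q_{|X_i}]}$. Note that the order $\pi$ plays no role beyond fixing the prefix $X_i = \{x_1,\dots,x_i\}$, so the argument will in fact work for any subset of $X$ and hence for every order at once. The thing to be careful about is that $Q_{|X_i}$ is a query on the restricted hypergraph $\{e \cap X_i \mid e \in E\}$ rather than on $H$, so its answer set cannot be compared with $\wc(\CCard)$ until it has been lifted back into the class.

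For this lifting I would use a simple padding construction. Fix a fresh value $\star \notin D$. For each $e \in E$, with guarding relation $R_e \in Q$ (so $X_{R_e} = e$), define $R'_e$ on variable set $e$ by projecting $R_e$ onto $e \cap X_i$ and setting every variable of $e \setminus X_i$ to $\star$; that is, $R'_e = \{\sigma \cup \rho_e \mid \sigma \in (R_e)_{|e \cap X_i}\}$, where $\rho_e$ is the tuple mapping each variable of $e \setminus X_i$ to $\star$. Let $Q' = \{R'_e \mid e \in E\}$. Since a projection never increases cardinality, $\size{R'_e} \leq \size{R_e} \leq \tup[N](e)$, so $Q'$ respects all the cardinality constraints and thus $Q' \in \CCard$ (now over the domain $D \cup \{\star\}$).

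It then remains to match the two answer sets. To a tuple $\tau \in D^{X_i}$ I associate its padding $\hat\tau \in (D \cup \{\star\})^X$, equal to $\tau$ on $X_i$ and to $\star$ elsewhere. The key observation is that $\hat\tau_{|e} \in R'_e$ if and only if $\tau_{|e \cap X_i} \in (R_e)_{|e \cap X_i}$, because $\hat\tau$ is identically $\star$ on $e \setminus X_i$ and hence always agrees there with the tuples of $R'_e$. Comparing with the definition of the answer set of $Q_{|X_i}$, this yields $\tau \in \ans[Q_{|X_i}]$ iff $\hat\tau \in \ans[Q']$. The map $\tau \mapsto \hat\tau$ is injective, and it is surjective onto $\ans[Q']$: since $\bigcup E = X$, every answer of $Q'$ is forced to take the value $\star$ on all of $X \setminus X_i$ and is therefore a padding. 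Hence $\size{\ans[Q_{|X_i}]} = \size{\ans[Q']} \leq \wc(\CCard)$, which is exactly \kl{prefix closedness} for $\pi$.

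The main conceptual obstacle is resisting the tempting but incorrect shortcut of bounding $\size{\ans[Q_{|X_i}]}$ by $\size{\ans[Q]}$ through extending each partial answer to a full one. By \cref{lem:incpref} a tuple of $\ans[Q_{|X_i}]$ is only guaranteed to be \kl{consistent with \(Q\)}, i.e. to leave no relation of $Q[\tau]$ empty, which does not ensure that it extends to a genuine join answer of $Q$. The padding construction sidesteps this completely by turning the restricted query itself into a member of the class instead of trying to complete partial assignments.
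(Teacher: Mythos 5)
Your proof is correct and follows essentially the same route as the paper's: both lift \(Q_{|X_i}\) back into \(\CCard\) by padding each projected relation with a constant value on \(e \setminus X_i\) and then conclude via a cardinality-preserving correspondence between \(\ans[Q_{|X_i}]\) and the answer set of the padded query. The only cosmetic difference is that you pad with a fresh value \(\star \notin D\) while the paper reuses a fixed \(d \in D\) (assuming \(D \neq \emptyset\)); your explicit injectivity/surjectivity argument for the padding map, using \(\bigcup E = X\), is just a more detailed version of what the paper states as ``clearly''.
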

\begin{proof}
  Let \(Q \in \CCard\), \((x_1,\dots,x_n)\) be an order on \(X\) and \(i \leq
  n\). %
  We need to show that \(\ans[Q_{|X_i}] \leq \wc(\CCard)\). %
  To do so, we construct \(Q^* \in \CCard\) such that \(\size{\ans[Q_{|X_i}]} =
  \size{\ans[Q^*]}\). %
  Since \(Q^* \in \CCard\), we have by definition that \(\ans[Q^*] \leq
  \wc(\CCard)\), hence \(\ans[Q_{|X_i}] \leq \wc(\CCard)\). %

  Assume that \(Q\) is on domain \(D \neq \emptyset\) and let \(d \in D\) be some fixed
  element of \(D\). %
  We denote by \(d^Y \in D^{Y}\) the tuple defined as \(d^Y(y) = d\) for every
  \(y \in Y\). %
  Let \(R \in Q_{|X_i}\). %
  By definition, \(R = R_e\vert_{X_i}\) for some \(e \in E\). %
  Hence, \(\size{R} \leq \size{R_e} \leq \tup[N](e)\). %
  We define \(R^*_e \subseteq D^e\) as \(R \times \{d^{e \setminus X_i}\}\), that is, we extend every
  tuple from \(R\) to variables \(e\) by setting every missing variable to
  \(d\). %
  Clearly, \(\size{R^*} = \size{R} \leq \size{R_e} \leq \tup[N](e)\). %
  Hence the query \(Q^*\) defined as \(\{R^*\mid R \in Q_{|X_i}\}\) is in
  \(\CCard\). %
  Moreover, we clearly have \(\ans[Q^*] = \ans[Q_{|X_i}] \times \{d^{X \setminus X_i}\}\),
  therefore \(\size{\ans[Q^*]} = \size{\ans[Q_{|X_i}]}\) as needed to
  complete the proof. %
\end{proof}

We now generalise the previous result to classes defined via degree
constraints. %
Observe however that such classes may not always be \kl{prefix closed}, or
sometimes only for some particular order. %
For example, consider the query \(Q = R(x_3,x_1) \wedge S(x_3,x_2)\) and consider
the class \(\C\) respecting functional dependencies \(x_3 \rightarrow x_1\)
and \(x_3 \rightarrow x_2\) and cardinality constraints \(\size{R} \leq N\)
and \(\size{S} \leq N\). %
Clearly, \(\wc(\C_Q) \leq N\) since once \(x_3\) is fixed, so are \(x_1\) and
\(x_2\). %
Now, consider an instance \(Q^*\) where \(R^{*} = S^{*} = \{(i,i) \mid 0 < i \leq
N\}\). %
It is easy to see that \(Q^* \in \C\) and that \(Q^*_{|\{x_1,x_2\}}\) has \(N^2 >
\wc(\C_Q)\) answers. %
The previous example is not \kl{prefix closed} for \((x_1,x_2,x_3)\) because
we chose an order that goes in the wrong direction in regard to the
functional dependencies. %
One can check that \(\C_Q\) is \kl{prefix closed} for the order
\((x_3,x_2,x_1)\). %

This motivates the following definition: for \(H=(X,E)\) a hypergraph and
\(DC\) a set of degree constraints, we define the \emph{dependency graph}
\(G_{DC}\) as the graph whose vertex set is \(X\) and where there is an
edge \(u \rightarrow v\) if and only if there is a degree constraint
\((A,B,N_{B|A})\) in \(DC\) with \(u \in A\) and \(v \in B\). %
We say that \(DC\) is acyclic if \(G_{DC}\) is acyclic. %
In this case, an order \((x_1,\dots,x_n)\) is said to be \emph{compatible} with
\(DC\) if this is a topological sort of \(G_{DC}\). %
Unsurprisingly, this allows to prove the following generalisation of
\cref{thm:card-prefix-closed}: %

\begin{theorem}
  \label{thm:dc-prefix-closed}
  Let \(\CDeg\) be a class of join queries defined for hypergraph
  \(H=(X,E)\) and acyclic degree constraints \(DC\). %
  Then \(\CDeg\) is \kl{prefix closed} for every order compatible with
  \(DC\). %
\end{theorem}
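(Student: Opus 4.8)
The plan is to follow the proof of \cref{thm:card-prefix-closed} almost verbatim, isolating the single place where acyclicity and compatibility are really needed. Fix $Q \in \CDeg$, an order $(x_1,\dots,x_n)$ compatible with $DC$, and $i \leq n$; assume $D \neq \emptyset$ and fix $d \in D$, writing $d^Y$ for the tuple of $D^Y$ constant equal to $d$. As before, for each relation $R_e \in Q$ guarding a hyperedge $e$ I would set $R^*_e := (R_e)_{|X_i} \times \{d^{e \setminus X_i}\}$, padding every variable outside the prefix $X_i$ with the constant $d$, and let $Q^* := \{R^*_e \mid R_e \in Q\}$. The cross-product computation of \cref{thm:card-prefix-closed} carries over unchanged and gives $\ans[Q^*] = \ans[Q_{|X_i}] \times \{d^{X \setminus X_i}\}$, hence $\size{\ans[Q^*]} = \size{\ans[Q_{|X_i}]}$. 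So it suffices to show $Q^* \in \CDeg$, i.e. that each $R^*_e$ respects every degree constraint $\delta = (A,B,N_{B|A})$ it guards; prefix closedness then follows from $\size{\ans[Q_{|X_i}]} = \size{\ans[Q^*]} \leq \wc(\CDeg)$.

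The step I expect to be the main obstacle is exactly this verification, since padding can a priori inflate a degree: if a variable of $B \setminus A$ keeps its original (free-ranging) data while some governing variable of $A$ is padded to the constant $d$, fixing $A$ no longer controls $B$. Compatibility rules this out. Since $G_{DC}$ contains an edge $u \to v$ whenever $u \in A$ and $v \in B \setminus A$, any topological sort---in particular our order---places every variable of $A$ before every variable of $B \setminus A$. The consequence I would record is: \emph{if $(B \setminus A) \cap X_i \neq \emptyset$ then $A \subseteq X_i$}, because a variable of $B \setminus A$ lying among the first $i$ variables forces all its predecessors, hence all of $A$, into $X_i$.

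With this I would bound $\max_{\tau \in D^A} \size{R^*_e[\tau]_{|B}}$ by two cases. If $(B \setminus A) \cap X_i = \emptyset$, every variable of $B \setminus A$ is padded to $d$ in $R^*_e$, so $R^*_e[\tau]_{|B}$ has at most one element, which is $\leq N_{B|A}$ as $N_{B|A} \geq 1$. Otherwise $A \subseteq X_i$, so a tuple of $R^*_e$ agrees with $\tau \in D^A$ only through its unpadded $X_i$-part; the distinct elements of $R^*_e[\tau]_{|B}$ are then the distinct values taken by $\eta_{|(B \setminus A) \cap X_i}$ over the original tuples $\eta \in R_e$ with $\eta_{|A} = \tau$ (the remaining coordinates of $B \setminus A$ being constantly $d$). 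As $(B \setminus A) \cap X_i \subseteq B \setminus A$, projecting onto fewer coordinates cannot increase the number of distinct tuples, so this count is at most $\size{R_e[\tau]_{|B}} \leq N_{B|A}$ because $R_e$ respects $\delta$. In both cases $R^*_e$ respects $\delta$, so $Q^* \in \CDeg$ and the theorem follows.
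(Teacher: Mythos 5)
Your proof is correct and follows the same global strategy as the paper's: pad every relation of \(Q_{|X_i}\) on the variables outside the prefix with a constant \(d\), note that \(\size{\ans[Q^*]} = \size{\ans[Q_{|X_i}]}\), and reduce the theorem to verifying that \(Q^* \in \CDeg\). Where you differ is in the case analysis for that verification, and your version is actually the more careful one. The paper splits on whether \(X_i \cap e \subseteq A\), asserting that in the complementary case compatibility forces \(A \subseteq X_i\); as literally written this inference can fail when the witness variable of \((X_i \cap e) \setminus A\) lies in \(e \setminus B\), because the edges of \(G_{DC}\) coming from the constraint \((A,B,N_{B|A})\) only point into \(B\). (Take \(e=\{x_1,x_2,x_3\}\), \(B=\{x_2,x_3\}\), \(A=\{x_2\}\), order \((x_1,x_2,x_3)\), \(i=1\): then \(X_i\cap e=\{x_1\}\not\subseteq A\) yet \(A\not\subseteq X_i\); the constraint is still respected, but for your case-1 reason, not the paper's case-2 reason.) Your dichotomy on whether \((B\setminus A)\cap X_i\) is empty is the right one: when it is empty, all of \(B\setminus A\) is padded to \(d\) and the degree is at most \(1 \leq N_{B|A}\); when it is nonempty, a variable of \(B\setminus A\) sits in the prefix, so the topological order genuinely yields \(A\subseteq X_i\), and your count of distinct values of \(\eta_{|(B\setminus A)\cap X_i}\) over \(\eta\in R_e\) with \(\eta_{|A}=\tau\), bounded by \(\size{R_e[\tau]_{|B}}\leq N_{B|A}\) because projections cannot increase cardinality, is a precise rendering of the paper's chain \(\size{R_{|X_i}[\tau]_{|B}}\leq\size{R[\tau]_{|B}}\leq N_{B|A}\). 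In short: same construction and same key idea as \cref{thm:card-prefix-closed} lifted to degree constraints, with a case split that quietly repairs a small imprecision in the paper's own argument.
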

\begin{proof}
  The proof is very similar to the proof of
  \cref{thm:card-prefix-closed}. %
  Let \(Q \in \CDeg\) and \(i \leq n\). %
  We construct \(Q^*\) as in \cref{thm:card-prefix-closed}. %
  We still have \(\size{\ans[Q_{|X_i}]} = \size{\ans[Q^*]}\). %
  We only have to check that \(Q^* \in \CDeg\). %
  Let \(\delta = (A,B,N_{B|A}) \in DC\) be a cardinality constraint. %
  By definition, it is respected by an atom \(R\) of \(Q\) on variables \(e
  \supseteq B\). %
  We claim that \(R^* \in Q^*\) also respects \(\delta\). %
  Indeed \(X_i \cap e \subseteq A\), then for every \(\tau \in D^A\), there is at most one
  tuple in \(R^*[\tau]\) which is \(\tau \times d^{e \setminus X_i}\),
  hence \(\size{R^*[\tau]_{|B}} \leq 1 \leq N_{B|A}\). %
  Otherwise, since the order is compatible with \(DC\), \(A \subseteq
  X_i\). %
  Hence \(R^*[\tau] = R_{|X_i}[\tau] \times d^{e \setminus X_i}\). %
  In particular \(\size{R^*[\tau]} = \size{R_{|X_i}[\tau]} \leq
  \size{R[\tau]}\). %
  Hence projecting out on \(B\), \(\size{R_{|X_i}[\tau]_{|B}} \leq
  \size{R[\tau]_{|B}} \leq N_{B|A}\) since \(R\) respects the degree
  constraint \((A,B,N_{B|A})\). %
  Hence, \(R^*\) also respects this degree constraint. %
  Since this reasoning works for every \(R^* \in Q^*\), we conclude that \(Q^*
  \in \CDeg\). %
  Hence \(\size{\ans[Q_{|X_i}]} = \size{\ans[Q^*]} \leq \wc(\CDeg)\), which is
  what we needed to prove. %
\end{proof}

A direct corollary of \cref{thm:wcjcomplexity,thm:dc-prefix-closed} is that
\cref{alg:wcj} is almost worst case optimal on classes defined by acyclic
degree constraints. %

\begin{corollary}
  \label{cor:almostwc}
  Let \(\CDeg\) be a class of join queries defined for hypergraph
  \(H=(X,E)\), \(m=\size{E}, n=\size{X}\) and acyclic degree constraints
  \(DC\). %
  Assume \((x_1,\dots,x_n)\) is an order compatible with \(DC\). %
  Then for every \(Q \in \CDeg\), \(\WCJ(Q,\emptytuple)\) returns \(\ans\) in
  time \(\bigot(mn \cdot \size{D} \cdot \wc(\CDeg))\). %
\end{corollary}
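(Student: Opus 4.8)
The plan is to simply chain the two preceding results, since the statement is flagged as a direct corollary of \cref{thm:wcjcomplexity,thm:dc-prefix-closed}. First I would invoke \cref{thm:dc-prefix-closed}: by hypothesis the order \((x_1,\dots,x_n)\) is compatible with \(DC\), so that theorem immediately yields that \(\CDeg\) is \kl{prefix closed} for this order. This is the only place where the acyclicity of \(DC\) and the compatibility of the chosen order enter the argument, and \cref{thm:dc-prefix-closed} has already discharged all the work of verifying that the projected queries \(Q_{|X_i}\) cannot exceed \(\wc(\CDeg)\).

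Having established \kl{prefix closedness}, I would then apply \cref{thm:wcjcomplexity} with \(\C = \CDeg\). Its hypotheses are met verbatim: \(\CDeg\) is \kl{prefix closed} for \((x_1,\dots,x_n)\), and any \(Q \in \CDeg\) has \(n = \size{X}\) variables and \(m = \size{E}\) relations. The theorem therefore guarantees that \(\WCJ(Q,\emptytuple)\) returns \(\ans\) in time \(\bigot(nm \cdot \size{D} \cdot \wc(\CDeg))\), which is precisely the claimed bound. Note that correctness of the output — that \(\WCJ\) indeed returns \(\ans\) and not merely some set computed within the stated time — is already folded into \cref{thm:wcjcomplexity}, via the correctness analysis of \cref{sec:dpll}, so nothing further about the algorithm's behaviour needs to be checked.

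There is essentially no obstacle here: the corollary is a two-step composition that requires no new combinatorial or algorithmic idea beyond confirming that the hypotheses of the two theorems line up. If a reader wanted a single remaining caveat to flag, it is only that the factor \(\size{D}\) survives in the bound, so this statement establishes worst-case optimality only up to the domain size; the promised removal of that factor is deferred to the binarisation trick of \cref{sec:binarisation}, and is not part of what this corollary asserts.
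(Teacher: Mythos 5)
Your proof is correct and matches the paper exactly: the corollary is stated there as a direct consequence of \cref{thm:wcjcomplexity,thm:dc-prefix-closed}, composed in precisely the order you describe (first \kl{prefix closedness} via \cref{thm:dc-prefix-closed}, then the runtime bound via \cref{thm:wcjcomplexity}). Your closing remark about the surviving \(\size{D}\) factor being removed only later by binarisation also mirrors the paper's own discussion following the corollary.
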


Observe that in order to prove worst case optimality of \cref{alg:wcj} in
\cref{cor:almostwc}, we have not used any knowledge on the actual value of
\(\wc(\CDeg)\), which makes our approach simpler than existing analysis of
worst case optimal join algorithms. %

\subsection{Binarisation}
\label{sec:binarisation}

We have seen that \cref{alg:wcj} achieves \(\bigot(mn\size{D} \cdot \wc(\C))\)
complexity when \(\C\) is \kl{prefix closed}, which does not qualify as a
worst case optimal join yet. %
The extra \(\size{D}\) factor comes from the fact that we are testing every
possible value of \(d \in D\) for each variable, even if many of them will
directly lead to inconsistencies. %
We could overcome this issue by exploring only relevant values, using for
example the trie join algorithm from~\cite{veldhuizen2014triejoin} which
allows to enumerate values present in the intersection of every relation in
time \(\bigo(\log\size{D})\) or Hash indices as in~\cite{ngo2018worst}. %
While these techniques are interesting for practical implementation, our
goal in this paper is to use as little technical tools as possible. %
Hence, we present here a new simple technique to remove the extra
\(\size{D}\) factor. %
The main idea is that instead of testing every value in the domain for each
variable, we fix its value bit by bit. %
This could be implemented directly by modifying \cref{alg:wcj} or, as we
chose to present it, by transforming any join query \(Q\) on domain \(D\)
with \(n\) variables into a join query \(\bin[Q]\) with \(n \cdot b\)
variables where \(b = \lceil\log\size{D}\rceil\) variables on domain
\(\{0,1\}\) such that the answers of \(\bin[Q]\) are in one-to-one
correspondence with the answers of \(Q\). %
We do this by reencoding each element of the domain \(D\) in binary. %

More formally, let \(Q\) be a \kl{join query} on variables \(X\) and domain
\(D\). %
Without loss of generality, we assume that \(D = \{1,\dots,d\}\) for some \(d\)
and we let \(b = \lceil \log\ d \rceil\) to be the number of bits needed to
encode every element of \(D\). %
We represent each element \(k\) in \(D\) by the binary number \(\bin[k]\)
representing \(k\) and written with \(b\) bits. %
For \(1 \leq i \leq b\), let \({\bin[k]}\lbrack i\rbrack\) be the \(i^\mathsf{th}\) bit of a binary
representation of \(k \in D\). %
The function \(\bin[\cdot]\) is a bijection between \(D\) and its image. %

We now lift the functions \(\bin[\cdot]\) %
to pairs of bijections over \kl{tuples}, \kl{relations} and then over
\kl{join queries}. %
For a set of variables \(Y\), we denote by \(\bin[Y]\) the set \(\{y^i \mid y
\in Y, 1 \leq i \leq b\}\), that is, the set containing \(b\) distinct
copies of each variable of \(Y\). %
For \(\tau\in D^Y\), we define \(\bin[\tau]\) as follows: for every \(y\in Y\) and
\(i\in[b]\), \(\bin[\tau](y^i) = \bin[\tau(y)]\lbrack i\rbrack\). %
Given a relation \(R\subseteq D^Y\) we let \(\bin[R] = \{\bin[\tau] \mid \tau \in
R\}\). %
Finally, given a join query \(Q\) over variables \(X\) and domain \(D\), we
let \(\bin[Q] = \{\bin[R]\mid R \in Q\}\). %
Obviously, the answers of \(\bin[Q]\) are in one-to-one correspondence with
the answers of \(Q\). %
Moreover, we have that the cardinalities of the relations are invariant
under this transformation, i.e. \(\size{\bin[R]} = \size{R}\). %
Applying \cref{thm:dpll_complexity} on \(\bin[Q]\) directly yields the
following: %

\begin{theorem}
  \label{thm:wcjcomplexity_bin}
  Given a join query \(Q\) on domain \(D \subseteq [2^b]\) with \(m\) atoms,
  \((x_1,\dots,x_n)\) an order on the variables of \(Q\),
  \(\WCJ(\bin[Q],\emptytuple)\) with order \((x_1^1,\dots,x_1^b, \dots, x_n^1, \dots, x_n^b)\)
  computes \(\ans[Q]\) in time \(\bigot(m \sum_{i \leq n}\sum_{j \leq b}
  \size{\ans[\bin[Q]_{|X^j_i}]})\) where \(X^j_i=\{x_1^1, \dots, x_1^b, \dots, x_i^1, \dots,
  x_i^j\}\). %
\end{theorem}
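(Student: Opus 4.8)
The plan is to derive this statement as an immediate corollary of \cref{thm:dpll_complexity} applied to the binarised query \(\bin[Q]\), followed by a change of indexing and a cheap decoding step. First I would record the three structural facts about \(\bin[Q]\) that feed into \cref{thm:dpll_complexity}. Its domain is \(\{0,1\}\), so the domain size appearing in that theorem is the constant \(2\); its variable set is \(\bin[X] = \{x_i^j \mid 1 \leq i \leq n,\, 1 \leq j \leq b\}\), which has \(n \cdot b\) elements; and, since \(\bin[\cdot]\) acts as a bijection on relations, \(\bin[Q] = \{\bin[R] \mid R \in Q\}\) contains exactly \(m\) atoms, the same number as \(Q\).

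Next I would apply \cref{thm:dpll_complexity} to \(\bin[Q]\) with the order \((x_1^1, \dots, x_1^b, \dots, x_n^1, \dots, x_n^b)\). The theorem states that \(\WCJ(\bin[Q], \emptytuple)\) computes \(\ans[\bin[Q]]\) in time \(\bigot(m \cdot 2 \cdot \sum_k \size{\ans[\bin[Q]_{|P_k}]})\), where \(P_k\) runs over the prefixes of this order. Since \(2\) is a constant, the factor \(m \cdot 2\) is simply \(\bigo(m)\). It then remains to observe that the prefixes \(P_k\) of the order \((x_1^1, \dots, x_n^b)\) are precisely the sets \(X_i^j = \{x_1^1, \dots, x_1^b, \dots, x_i^1, \dots, x_i^j\}\) as the pair \((i,j)\) ranges over \(1 \leq i \leq n\) and \(1 \leq j \leq b\). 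Re-indexing the single sum over prefixes as the double sum over \((i,j)\) turns \(\sum_k \size{\ans[\bin[Q]_{|P_k}]}\) into \(\sum_{i \leq n} \sum_{j \leq b} \size{\ans[\bin[Q]_{|X_i^j}]}\), yielding the claimed running time.

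Finally I would translate the output back from \(\bin[Q]\) to \(Q\). As already noted, the answers of \(\bin[Q]\) are in one-to-one correspondence with those of \(Q\), so decoding each tuple produced by \(\WCJ(\bin[Q], \emptytuple)\)---reading the value of each \(x_i\) off its \(b\) bits \(x_i^1, \dots, x_i^b\)---recovers \(\ans[Q]\) exactly; this decoding costs \(\bigo(nb)\) per answer and is absorbed into the \(\bigot(\cdot)\) notation. There is essentially no obstacle in this argument: the whole proof is a direct corollary, and the only two points that require a moment's care are that the binarised domain has constant size (which is what kills the offending \(\size{D}\) factor of \cref{thm:dpll_complexity}) and that the enumeration of prefixes of the bit-level order lines up exactly with the double summation over \((i,j)\).
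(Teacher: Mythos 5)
Your proposal is correct and follows exactly the paper's route: the paper also obtains this theorem by directly applying \cref{thm:dpll_complexity} to \(\bin[Q]\), using that the binarised domain is \(\{0,1\}\) (so the \(\size{D}\) factor becomes the constant \(2\)), that \(\bin[Q]\) has \(nb\) variables and \(m\) atoms, and that the answers of \(\bin[Q]\) are in one-to-one correspondence with those of \(Q\). Your extra care about re-indexing the prefix sum as a double sum over \((i,j)\) and about the \(\bigo(nb)\)-per-tuple decoding step only makes explicit what the paper leaves implicit.
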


To show worst case optimality, it remains to bound
\(\max_{i,j}\size{\ans[\bin[Q]_{|X^j_i}]}\) by \(\wc(\C)\). %
We do this by showing that in the case of acyclic degree constraint,
\(\bin[Q]\) belongs to a class \(\bin[\C]\) defined by acyclic degree
constraints where \(\wc(\bin[\C]) \leq \wc(\C)\) and such that if \(\C\) is
\kl{prefix closed} for \(x_1,\dots,x_n\) then \(\bin[\C]\) is \kl{prefix
  closed} for \(x_1^1,\dots,x_1^b, \dots, x_n^1, \dots, x_n^b\). %
The idea is to binarise the degree constraints as follows: for \(b \in \N\)
and a degree constraint \(\delta = (A,B,N)\), we denote by \(\bin[\delta]\)
the degree constraint \((\bin[A],\bin[B],N)\) and for a set \(DC\) of
degree constraints, let \(\bin[DC] := \{\bin[\delta] \mid \delta \in
DC\}\). %
We show: %

\begin{lemma}
  \label{lem:binariseDC}
  Let \(DC\) be a set of degree constraints, \(H\) a hypergraph and \(b \in
  \N\). %
  For every \(Q \in \CDeg\) on domain \(D \subseteq [2^b]\), we have \(\bin[Q] \in
  \CDeg[\bin[DC]][\bin[H]]\). %
  Moreover, \(\wc(\CDeg[\bin[DC]][\bin[H]]) \leq \wc(\bin[Q])\). %
  Finally, if \(DC\) is acyclic and \(x_1,\dots,x_n\) is an order compatible with
  \(DC\), then \(\bin[DC]\) is acyclic and \(x_1^1,\dots,x_1^b,\dots,x_n^1,\dots,x_n^b\) is an
  order compatible with \(\bin[DC]\). %
\end{lemma}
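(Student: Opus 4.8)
The plan is to treat the three assertions in turn, exploiting throughout that binarisation acts on each variable independently, so that it commutes with restriction to a set of variables and with the degree projection \(R \mapsto R_{|B}\). First I would prove the membership \(\bin[Q] \in \CDeg[\bin[DC]][\bin[H]]\). Fix a degree constraint \(\delta = (A,B,N) \in DC\), guarded by an atom \(R \in Q\) with \(X_R \supseteq B\); then \(\bin[R]\) has variable set \(\bin[X_R] \supseteq \bin[B]\) and guards \(\bin[\delta] = (\bin[A],\bin[B],N)\) in \(\bin[H]\), so it suffices to check \(\size{{\bin[R]}[\rho]_{|\bin[B]}} \leq N\) for every \(\rho \in \{0,1\}^{\bin[A]}\). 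If \(\rho = \bin[\tau]\) for some \(\tau \in D^A\), then \(\bin[\cdot]\) restricts to a bijection between \(R[\tau]\) and \({\bin[R]}[\rho]\) that commutes with projection to \(B\) (resp.\ \(\bin[B]\)), whence \(\size{{\bin[R]}[\rho]_{|\bin[B]}} = \size{R[\tau]_{|B}} \leq N\) because \(R\) respects \(\delta\). If instead \(\rho\) is not a valid binary encoding, no tuple of \(\bin[R]\) can agree with it on \(\bin[A]\), so \({\bin[R]}[\rho] = \emptyset\) and the degree is \(0 \leq N\). This handles all of \(\bin[DC]\).

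The worst-case bound is the crux, and I would establish it in the reverse direction, by \emph{debinarising}. The obstacle is that \(\CDeg[\bin[DC]][\bin[H]]\) is strictly larger than \(\{\bin[Q] \mid Q \in \CDeg\}\): its queries need not consist of valid encodings, so one cannot simply invert \(\bin[\cdot]\). Instead, given any \(Q' \in \CDeg[\bin[DC]][\bin[H]]\) over a domain \(D'\), I group the \(b\) bit-copies \(x^1,\dots,x^b\) of each variable \(x\) into a single variable over \((D')^b\); writing \(\debin[\cdot]\) for this map and setting \(Q'' = \{\debin[R'] \mid R' \in Q'\}\), the grouping is a genuine bijection from \((D')^{\bin[Y]}\) onto \(((D')^b)^Y\) — with no invalid values to discard, which is precisely where the two directions differ — and it again commutes with restriction and projection. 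Hence for \(\delta = (A,B,N) \in DC\) guarded by \(R'\) and for every \(\mu \in ((D')^b)^A\), with ungrouping \(\rho\), I get \(\size{\debin[R'][\mu]_{|B}} = \size{R'[\rho]_{|\bin[B]}} \leq N\); since every \(\mu\) arises from some \(\rho\), this shows \(Q'' \in \CDeg\). As \(\debin[\cdot]\) is a bijection of the full cubes commuting with membership in each relation, it maps \(\ans[Q']\) bijectively onto \(\ans[Q'']\), so \(\size{\ans[Q']} = \size{\ans[Q'']} \leq \wc(\CDeg)\); taking the supremum over \(Q'\) gives \(\wc(\CDeg[\bin[DC]][\bin[H]]) \leq \wc(\CDeg)\). (I read the ``\(\wc(\bin[Q])\)'' of the statement as this \(\wc(\CDeg)\), matching the surrounding text.)

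For the last assertion I would compute the dependency graph of \(\bin[DC]\). By definition \(u^i \to v^j\) is an edge of \(G_{\bin[DC]}\) iff some \(\bin[\delta] = (\bin[A],\bin[B],N)\) has \(u^i \in \bin[A]\) and \(v^j \in \bin[B]\), i.e.\ \(u \in A\) and \(v \in B\) — exactly the condition for \(u \to v\) in \(G_{DC}\), independently of the bit indices. Thus \(G_{\bin[DC]}\) is the blow-up of \(G_{DC}\): each vertex \(u\) is split into copies \(u^1,\dots,u^b\), each edge \(u \to v\) between distinct variables becomes all \(b^2\) edges \(u^i \to v^j\), and, since the dependency relation joins only distinct variables, no edge lies inside a copy-group. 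A directed cycle in \(G_{\bin[DC]}\) would then project onto a closed walk of genuine edges in \(G_{DC}\), hence a cycle there, contradicting acyclicity; so \(\bin[DC]\) is acyclic. For compatibility, every edge \(x_p^i \to x_q^j\) comes from an edge \(x_p \to x_q\) of \(G_{DC}\) with \(p \neq q\), so \(p < q\) because \((x_1,\dots,x_n)\) is a topological order; and in \(x_1^1,\dots,x_1^b,\dots,x_n^1,\dots,x_n^b\) the block index \(p\) dominates, so \(x_p^i\) precedes \(x_q^j\). Every edge points forward, i.e.\ the order is a topological sort of \(G_{\bin[DC]}\).

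The step I expect to be the main obstacle is the worst-case bound: one must both exhibit a debinarised witness over the enlarged domain \((D')^b\) and verify that it satisfies the \emph{un-binarised} constraints \(DC\), rather than the binarised ones; once the commutation of binarisation with restriction and projection is set up, the remaining two parts are essentially bookkeeping.
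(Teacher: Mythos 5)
Your proposal is correct and follows essentially the same route as the paper's proof: the same encoding bijection between \(R[\tau']\) and \(\bin[R][b][\tau]\) for the membership claim, the same grouping of the \(b\) bit-copies of each variable into tuples over \((D')^b\) to map any query of \(\CDeg[\bin[DC]][\bin[H]]\) back into \(\CDeg\) for the worst-case bound, and the same projection of the blown-up dependency graph for acyclicity and order compatibility. Your reading of the statement's \(\wc(\bin[Q])\) as \(\wc(\CDeg)\) is exactly what the paper's own proof establishes, so you identified and repaired the typo correctly.
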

\begin{proof}
  The first part of the statement follows from the following observation:
  let \(\delta = (A,B,N)\) be a degree constraint and \(R\) a relation on
  variables \(e \supseteq B\) which respects \(\delta\), then \(\bin[R]\)
  respects \(\bin[\delta]\). %
  Indeed, let \(\tau\) be an assignment of \(\bin[A]\) and let \(\tau'\) be the
  corresponding assignment of \(A\) on domain \(2^b\) defined as \(\tau'(x)
  = \sum_{i=1}^b 2^{i-1}\tau(x^i)\). %
  Then it is easy to see that \(R[\tau']|_Y\) is in one-to-one correspondence
  with \(\bin[R][b][\tau]|_{\bin[Y]}\) by using the same encoding. %
  In particular, \(\size{\bin[R][b][\tau]|_{\bin[Y]}} = \size{R[\tau']|_Y} \leq
  N\). %

  Now let \(Q' \in \CDeg[\bin[DC]][\bin[H]]\) be a query on domain \(D\). %
  We let \(Q\) to be the query on hypergraph \(H\) on domain \(D^b\) where
  each relation \(R'\) of \(Q'\) on variables \(\bin[Y]\) is transformed
  into a relation \(R\) on variables \(Y\) as follows: for a tuple \(\tau'
  \in R'\), we build the tuple \(\tau \in R\) by taking for each \(y \in
  Y\), \(\tau(y) = \bigtimes_{j \leq b}\tau'(y^j)\). %
  It is easy to see that if \(R'\) respects degree constraint \(\bin[\delta]\),
  then \(R\) respects \(\delta\) and that \(\ans[Q]\) and \(\ans[Q']\) are
  in one-to-one correspondence. %
  Hence \(Q \in \CDeg\) and then \(\wc(\CDeg[\bin[DC]][\bin[H]]) \leq
  \wc(\CDeg)\). %

  Finally, by definition, it is clear that there is an edge in \(G_{DC}\)
  between \(x\) and \(y\) if and only if there is an edge between \(x^i\)
  and \(y^j\) for every \(i,j \leq b\) in \(G_{\bin[DC]}\). %
  Assume towards a contradiction that there is a path from \(x_i^j\) to
  \(x_k^\ell\) for some \(i \geq k\) in \(G_{\bin[DC]}\). %
  Then there is necessarely a path from \(x_i\) to \(x_k\) in \(G_{DC}\) from
  what precedes, which contradicts the fact that \((x_1,\dots,x_n)\) is a
  topological sort of \(G_{DC}\). %
\end{proof}

A direct consequence of \cref{lem:binariseDC} is that
\(\CDeg[\bin[DC]][\bin[H]]\) is \kl{prefix closed} and its worst case is
not greater than the worst case of \(\CDeg\). %
Since for any \(Q \in \CDeg\), the domain of \(\bin[Q]\) is two and has \(bn
= \bigot(n)\) variables, we have a worst case optimal join algorithm for
\(\CDeg\): %

\begin{corollary}
  \label{cor:wc}
  Let \(\CDeg\) be a class of \kl{join queries} defined for hypergraph
  \(H=(X,E)\), \(m = \size{E}, n = \size{X}\) and acyclic degree
  constraints \(DC\). %
  Assume \(x_1,\dots,x_n\) is an order compatible with \(DC\). %
  Then for every \(Q \in \CDeg\) on domain \(D \subseteq [2^b]\),
  \(\WCJ(\bin[Q],\emptytuple)\) on order
  \(x_1^1,\dots,x_1^b,\dots,x_n^1,\dots,x_n^b\) returns \(\ans\) in time
  \(\bigot(mn \cdot \wc(\CDeg))\). %
\end{corollary}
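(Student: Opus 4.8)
The plan is to assemble the corollary purely from the preceding results, treating it as bookkeeping on top of \cref{thm:wcjcomplexity_bin}. First I would invoke that theorem directly: applied to \(Q\) on domain \(D \subseteq [2^b]\) with the bit-order \(x_1^1,\dots,x_1^b,\dots,x_n^1,\dots,x_n^b\), it states that \(\WCJ(\bin[Q],\emptytuple)\) computes \(\ans\) in time \(\bigot(m \sum_{i \leq n}\sum_{j \leq b} \size{\ans[\bin[Q]_{|X^j_i}]})\). The whole task then reduces to bounding each summand by \(\wc(\CDeg)\) and to counting how many summands there are.

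For the per-term bound I would route through the binarised class. By \cref{lem:binariseDC}, since \(Q \in \CDeg\) we have \(\bin[Q] \in \CDeg[\bin[DC]][\bin[H]]\), the constraint set \(\bin[DC]\) is again acyclic, and the bit-order is compatible with \(\bin[DC]\). \cref{thm:dc-prefix-closed} then applies to the binarised class and shows it is \kl{prefix closed} for exactly this order. Unfolding the definition of \kl{prefix closedness}, every prefix projection satisfies \(\size{\ans[\bin[Q]_{|X^j_i}]} \leq \wc(\CDeg[\bin[DC]][\bin[H]])\), and the middle claim of \cref{lem:binariseDC} bounds this quantity in turn by \(\wc(\CDeg)\). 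Hence each of the \(\size{\ans[\bin[Q]_{|X^j_i}]}\) is at most \(\wc(\CDeg)\).

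Finally I would count: the double sum has \(nb\) terms, so the running time is \(\bigot(mnb \cdot \wc(\CDeg))\), and since \(b = \lceil\log\size{D}\rceil\) is polylogarithmic in the data size it is absorbed by the \(\bigot(\cdot)\) notation, leaving \(\bigot(mn \cdot \wc(\CDeg))\). There is no real obstacle here, as every ingredient has already been proved; the only point demanding care is that the prefix closedness supplied by \cref{thm:dc-prefix-closed} must be taken with respect to the precise bit-order driving \cref{alg:wcj}, which is exactly the compatibility guaranteed by the last part of \cref{lem:binariseDC}.
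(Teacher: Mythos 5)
Your proposal is correct and follows essentially the same route as the paper: the paper also derives \cref{cor:wc} by combining \cref{thm:wcjcomplexity_bin} with \cref{lem:binariseDC} (membership of \(\bin[Q]\) in the binarised class, the worst-case bound \(\wc(\CDeg[\bin[DC]][\bin[H]]) \leq \wc(\CDeg)\), and compatibility of the bit-order) and \cref{thm:dc-prefix-closed}, absorbing the factor \(b = \lceil\log\size{D}\rceil\) into the \(\bigot(\cdot)\) notation. You merely spell out the bookkeeping that the paper compresses into ``a direct consequence of \cref{lem:binariseDC}'', including the one genuinely delicate point --- that prefix closedness must hold for the exact bit-order driving the algorithm --- which you handle correctly.
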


There is a slight abuse in the statement of \cref{cor:wc} as the algorithm
does not directly return \(\ans[Q]\) but a binary representation of each
tuple in \(\ans[Q]\). %
However, it is straightforward to turn each answer of \(\bin[Q]\) back to
the corresponding answer of \(Q\) in \(\bigot(1)\). %


\subparagraph*{Comparison with Generic Join and Leapfrog Triejoin.}

Our algorithm is quite similar to \emph{Generic Join}\cite[Algoritm~3]{NgoRR13} and
\emph{Leapfrog Trie Join}~\cite{veldhuizen2014triejoin}, which can already be seen
as a particular case of Generic Join. %
Similarly, \cref{alg:wcj} can be seen as a degenerated form of Generic
Join. %
The main difference in the approach is that both Generic Join and Trie Join
use a specific algorithm (trie join algorithm and \(m\)-way sort merge
respectively) to ensure that a variable \(x\) is branched only on values
that would not introduce any inconsistency. %
We circumvent this need by using binarisation instead, which can be seen,
from a higher perspective, as simply branching on the bits of each value
instead. %

The main novelty in our work is the complexity analysis. %
The analysis for Generic Join from~\cite{NgoRR13} relies on the knowledge of
the value of the worst case, known as the AGM bound for cardinality
constraints and polymatroid bound for acyclic degree constraints. %
While proofs of these bounds can be found in numerous references
(see~\cite{Suciu23} for a survey), they add a layer of complexity in the
understanding of why these simple branch and bound strategies achieve worst
case optimality. %
Our analysis instead relies on a very easy to check property of the classes
considered, namely the prefix closedness property: ``forgetting'' variables
in the tables will not allow to create an instance having more answers than
the worst case. %
Proving that classes defined by cardinality constraints or acyclic degree
constraints are prefix closed is elementary and very natural. %
Our approach is closer in spirit with the one taken in
\cite{veldhuizen2014triejoin} where the concept of \emph{renumbering} is
introduced, and the analysis of the runtime is bounded by the runtime of
the algorithm on a normalised instance where some values have been
changed. %
While the approach is similar, we feel that \kl{prefix closedness} is an
easier notion. %


\section{Uniform Sampling}
\label{sec:sampling}

There has been significant work around extending WCOJ algorithms into
sampling algorithms. %
It is known that if \(Q \in \C\) for \(\C\) a class of queries defined by
cardinality constraints~\cite{dengJSH2023,kimAGMOUT2023} or by acyclic degree
constraints~\cite{wangJSA2024}, one can uniformly sample \(\tau \in \ans\) in time
\(\bigot({\size{\wc(\C)} \over \max(1, \size{\ans})})\). %
We recover both results using an elementary algorithm, adaptated
from~\cite{rosenbaumSLTEP1993} to uniformly sample leaves from a tree without
exploring it completely. %

\subsection{Efficiently sampling the leaves of a tree}

Our core technique relies on the problem of sampling uniformly a leaf of a
rooted tree. %
We aim at designing an algorithm which avoids exploring the tree
exhaustively. %
This question has already been addressed by Rosenbaum
in~\cite{rosenbaumSLTEP1993} where he proposes an algorithm exploring the
tree in a top-down manner, and whenever it encouters a leaf, either returns
it or fails. %
To remove bias toward subtrees having many leaves in the algorithm, he
guides the search with upper bounds on the number of leaves of each
subtree, obtained from the depth and the branching size of the tree. %

We adapt Rosenbaum's algorithm in a slightly more general setting. %
Indeed, in our approach, the leaves of the tree will correspond to cases
where the recursion of \cref{alg:wcj} stops. %
In this case, either a solution is found and we are interested in the leaf,
or a inconsistency is found and we want to reject the leaf. %
We adapt the algorithm to be able to work on a tree when we want to sample
only a subset of its leaves. %
Moreover, the upper bounds on the leaves used in~\cite{rosenbaumSLTEP1993} is
too coarse for our purposes, we therefore describe the algorithm by using
oracle calls to a function (over-)estimating this number of leaves. %
This motivates the following definition: %

\begin{definition}
  Let \(T\) be a rooted tree. A \AP\intro{leaf estimator \(\up\)} for \(T\)
  is a function mapping nodes of \(T\) to positive values such that: %
  \begin{enumerate*}[label=(\roman*)]
  \item for every node \(t\) with children \(t_1,\dots,t_n\), \(\up(t) \geq \sum_{i = 1}^n
    \up(t_i)\); we call functions with this property
    \emph{tree-superadditive}; and
  \item if \(t\) is a leaf, then \(\up(t) \in \{0, 1\}\).
  \end{enumerate*}
\end{definition}

We denote by \(\up(T)\) the value of \(\up(r)\) where \(r\) is the root of
\(T\). %
Now, given a tree \(T\) and a leaf estimator \(\up\) for \(T\), we say that
a leaf \(\ell\) of \(T\) is a \AP\intro{\(1\)-leaf of \(T\)} if and only if
\(\up(\ell) = 1\) and denote by \(\onel\) the set of \kl{\(1\)-leaves} of
\(T\). %
Our goal is to uniformly sample \(\ell \in \onel\). %
Observe that since \(\up\) is a tree-superadditive function, for any node
\(t\), we have that \(\up(t)\) is an upper bound on the number of
\kl{\(1\)-leaves} below \(t\). %
We define \(\mathsf{children}(t)\) as the function that, given a node
\(t\), returns the list of its direct children. %

We define our algorithm recursively as follows: %

\begin{algorithm}[H]
  \caption{A variation of the Rosenbaum algorithm~\cite{rosenbaumSLTEP1993}}
  \label{alg:rosenbaum}
  Sample a leaf in the subtree root in $t$ as follows:
  \begin{itemize}
  \item if \(t\) is a leaf belonging to \(\onel\), output the leaf with
    probability \(1\);
  \item it \(t\) is any other leaf, fail with probability \(1\); and
  \item if \(t\) has children \(t_1,\dots,t_n\), recursively sample a
    \kl{\(1\)-leaf} in \(t_i\) with probability \(\frac{\up(t_i)}{\up(t)}\)
    and return it if the recursive call in \(t_i\) succeeds and fail
    otherwise. %
    Note that we may directly fail without recursively sampling with
    probability \(1 - \sum_i\frac{\up(t_i)}{\up(t)}\). %
  \end{itemize}
\end{algorithm}

\begin{theorem}
  \label{thm:sampling-once}
  Let \(T\) be a tree rooted in \(r\) and \(\up\) a \kl{leaf estimator} for
  \(T\). %
  Let \(\mathsf{out}\) be the output of \cref{alg:rosenbaum} on input
  \(r\). %
  Then, for any leaf \(\ell \in \onel\), we have that \cref{alg:rosenbaum} is a
  uniform Las Vegas sampler with guarantees: %
  \[
    \mathsf{Pr}(\mathsf{out} = \ell) = \frac{1}{\up(T)} \text{\qquad and \qquad}
  \mathsf{Pr}(\mathsf{out} = \mathsf{fail}) = 1 -
  \frac{\size{\onel}}{\up(T)}
  \]

  \cref{alg:rosenbaum} consists of \(\bigo(B \cdot \mathsf{depth}(T))\) calls
  to \(\up\), where \(B\) is the branching size of the tree, in
  \(\bigo(\mathsf{depth}(T))\) calls to the \children\ function. %
\end{theorem}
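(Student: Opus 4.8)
The plan is to establish the two probabilistic identities by a single telescoping computation and then read off the complexity from the fact that one run of \cref{alg:rosenbaum} descends along a single root-to-leaf path. I would first fix a leaf $\ell \in \onel$ and let $r = s_0, s_1, \dots, s_k = \ell$ be the unique path from the root to $\ell$. The algorithm returns $\ell$ exactly when, at each internal node $s_j$ along this path, the recursion chooses to descend into the child $s_{j+1}$ rather than failing directly or descending into a sibling. The choice made at $s_j$ depends only on $s_j$, so the probability of following the entire path is the product of the per-step probabilities, $\mathsf{Pr}(\mathsf{out} = \ell) = \prod_{j=0}^{k-1} \frac{\up(s_{j+1})}{\up(s_j)}$. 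This product telescopes to $\frac{\up(s_k)}{\up(s_0)} = \frac{\up(\ell)}{\up(r)}$, and since $\up(\ell) = 1$ (because $\ell \in \onel$) and $\up(r) = \up(T)$ by definition, it equals $\frac{1}{\up(T)}$; being independent of $\ell$, this also gives uniformity. The branching step here is well defined precisely because $\up$ is \emph{tree-superadditive}: $\sum_i \frac{\up(t_i)}{\up(t)} \leq 1$, so the residual direct-failure probability $1 - \sum_i \frac{\up(t_i)}{\up(t)}$ is nonnegative.

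For the failure probability, I would observe that the only successful outputs are $1$-leaves: a leaf is emitted solely when it lies in $\onel$, every other leaf triggers a failure, and an internal node either recurses or fails. Hence the events $\{\mathsf{out} = \ell\}_{\ell \in \onel}$ and $\{\mathsf{out} = \mathsf{fail}\}$ partition the probability space, so summing the identity above over $\ell \in \onel$ gives $\mathsf{Pr}(\mathsf{out} = \mathsf{fail}) = 1 - \sum_{\ell \in \onel} \frac{1}{\up(T)} = 1 - \frac{\size{\onel}}{\up(T)}$.

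For the running time, I would note that a single execution visits only the nodes of one root-to-leaf path. At each visited node $t$ the algorithm calls $\children$ once and evaluates $\up$ on each of its at most $B$ children to form the branching distribution, before committing to a single child or failing. Since the path has length at most $\mathsf{depth}(T)$, this amounts to $\bigo(\mathsf{depth}(T))$ calls to $\children$ and $\bigo(B \cdot \mathsf{depth}(T))$ calls to $\up$.

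I do not anticipate a genuine difficulty: the heart of the argument is the telescoping identity, which collapses all the per-level branching probabilities into the single ratio $\up(\ell)/\up(T)$. The only steps that need a little care are verifying that the branching distribution is legitimate, which is exactly the content of \emph{tree-superadditivity}, and confirming that no successful output can be anything other than a $1$-leaf, so that the partition used for the failure probability is exhaustive.
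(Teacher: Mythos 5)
Your proof is correct and takes essentially the same approach as the paper: your telescoping product along the root-to-leaf path is exactly the closed form of the paper's induction on depth, which multiplies the branching probability \(\up(t_i)/\up(t)\) by the inductively obtained \(1/\up(t_i)\). Your explicit derivation of the failure probability by partitioning the outcome space, and your check that tree-superadditivity makes the branching distribution legitimate, are details the paper leaves implicit, but the underlying argument is identical.
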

\begin{proof}
  We proceed by induction on the depth of the tree \(T\). %
  If the tree \(T\) is of depth \(1\), then it can be one of two cases:
  either it belongs to \(\onel\) and then \(\size{\onel} = 1\) and
  therefore it is trivial to see that the algorithm samples this leaf with
  probability \(\frac{1}{\up(T)} = 1\), or it does not belong to \(\onel\)
  and then there is nothing to sample, so the algorithm fails inevitably. %

  Supposing that the property holds for a tree \(T'\) of depth at most
  \(k\), if we now have a tree \(T\) of depth \(k + 1\), then it has
  children \((t_1,\dots,t_n)\) each of depth at most \(k\). %
  Then, by induction, \cref{alg:rosenbaum} samples from a given \(t_i\) with
  probability \(\frac{\up(t_i)}{\up(t)}\). If the recursive call has
  succeeded, then the algorithm has sampled a leaf from \(t_i\) with
  probability \(\frac{1}{\up(t_i)}\). Since the random choices are
  independent, the probability of outputing this leaf from \(t\) is
  \(\frac{\up(t_i)}{\up(t)}\times \frac{1}{\up(t_i)} =
  \frac{1}{\up(t)}\). %

  The complexity statement is straightforward. We need to evaluate the
  number of selected leaves in each subtree along a path from the root to a
  leaf, leading to \(\bigo(B \cdot \mathsf{depth}(T))\) calls to \(\up\)
  and for each node we visit, we need to find the list of children, leading
  to \(\bigo(\mathsf{depth}(T))\) calls to \(\mathsf{children}\).
\end{proof}

We can extend \cref{thm:sampling-once} with the following corollary: %

\begin{corollary}
  \label{cor:sampling_with_rosenbaum}
  Given a tree \(T\) with branching size \(B\) and oracle access to a leaf
  estimator function \(\up(\cdot)\), we can sample the leaves in \(\onel\)
  with uniform probability \(\frac{1}{\size{\onel}}\), when
  \(\size{\onel}>0\) or answer that \(\size{\onel}=0\). %
  This is done by repeating \cref{alg:rosenbaum} an expected
  \(\bigo(\frac{\up(T)}{\max(1,\size{\onel})})\) number of times and thus
  with an expected number of calls to \up\ in
  \(\bigo(\frac{\up(T)}{\max(1,\size{\onel})} \cdot B \cdot
  \mathsf{depth}(T))\) and to \children\ in
  \(\bigo(\frac{\up(T)}{\max(1,\size{\onel})} \cdot \mathsf{depth}(T))\). %
\end{corollary}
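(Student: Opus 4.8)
The plan is to turn the single-shot sampler of \cref{thm:sampling-once} into an exact uniform sampler by plain rejection, that is, by repeating \cref{alg:rosenbaum} until a run stops failing. The first thing I would establish is \emph{conditional uniformity}. Fix any $\ell \in \onel$ and condition one run on the event that it does not fail. By \cref{thm:sampling-once} each $\ell \in \onel$ is returned with probability $\frac{1}{\up(T)}$, so the run succeeds with probability $\frac{\size{\onel}}{\up(T)}$ and
\[
\mathsf{Pr}(\mathsf{out} = \ell \mid \mathsf{out} \neq \mathsf{fail}) = \frac{1/\up(T)}{\size{\onel}/\up(T)} = \frac{1}{\size{\onel}}.
\]
Hence a single successful run is already uniform over $\onel$; this is the heart of the argument.

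Next I would analyse the repeat-until-success loop. Since the runs are mutually independent and each succeeds with the same probability $p = \frac{\size{\onel}}{\up(T)}$, the index of the first success is geometric with parameter $p$, so when $\size{\onel} > 0$ the expected number of repetitions is $\frac{1}{p} = \frac{\up(T)}{\size{\onel}}$. Uniformity is inherited by the loop: conditioning on the first success, the returned leaf has exactly the conditional distribution computed above, which is uniform and independent of how many prior runs failed. This gives the $\frac{1}{\size{\onel}}$ sampling guarantee together with the expected $\bigo(\frac{\up(T)}{\max(1,\size{\onel})})$ bound on the number of repetitions in the non-empty case.

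For the complexity I would compose the per-run cost of \cref{thm:sampling-once} with the expected number of repetitions. Each run costs $\bigo(B \cdot \mathsf{depth}(T))$ calls to $\up$ and $\bigo(\mathsf{depth}(T))$ calls to $\children$; since this per-run cost is bounded uniformly and the number of runs is an independent stopping time, Wald's identity (equivalently, plain linearity of expectation over the geometric number of trials) yields the announced expected totals $\bigo(\frac{\up(T)}{\max(1,\size{\onel})} \cdot B \cdot \mathsf{depth}(T))$ and $\bigo(\frac{\up(T)}{\max(1,\size{\onel})} \cdot \mathsf{depth}(T))$.

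The delicate point, and the main obstacle I expect, is the degenerate case $\size{\onel} = 0$: then every run fails and naive repetition never halts, which is exactly why the bound is phrased with $\max(1,\size{\onel})$, giving expected $\bigo(\up(T))$ there. The plan is to equip the loop with a stopping rule, where a success instantly certifies $\size{\onel} > 0$ and returns a uniform sample, whereas a suitable number of consecutive failures is read as evidence of emptiness. Getting this threshold right is the crux: it must be small enough to keep the expected number of repetitions $\bigo(\up(T))$, yet large enough that when $\size{\onel} \geq 1$, and hence the per-run success probability is at least $\frac{1}{\up(T)}$, the verdict $\size{\onel} = 0$ is reached only with controlled probability. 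I would reconcile these by exploiting that $\up(T)$ is a known integer upper bound on $\size{\onel}$, so the emptiness test can be calibrated directly against $\up(T)$.
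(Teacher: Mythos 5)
Your treatment of the main case \(\size{\onel} > 0\) is correct and is essentially the paper's own argument: conditional uniformity of a successful run of \cref{alg:rosenbaum} (each \(\ell \in \onel\) returned with probability \(\frac{1}{\up(T)}\), success probability \(\frac{\size{\onel}}{\up(T)}\)), a geometric number of repetitions with mean \(\frac{\up(T)}{\size{\onel}}\), and composition with the per-run cost of \(\bigo(B \cdot \mathsf{depth}(T))\) calls to \(\up\) and \(\bigo(\mathsf{depth}(T))\) calls to \(\children\) from \cref{thm:sampling-once}.

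The gap is in the degenerate case \(\size{\onel} = 0\), which you correctly flag as the crux but do not resolve, and your proposed fix cannot be completed as stated. A stopping rule that declares emptiness after some threshold \(N\) of consecutive failures can never \emph{certify} that \(\onel = \emptyset\): when \(\size{\onel} \geq 1\), the probability of \(N\) consecutive failures is \(\left(1 - \frac{\size{\onel}}{\up(T)}\right)^N > 0\), so the verdict \(\size{\onel} = 0\) is returned with nonzero probability on nonempty instances. The corollary is an always-correct (Las Vegas) statement — it must either output a uniform sample or correctly answer that \(\onel\) is empty — so any Monte Carlo emptiness test falls short. Quantitatively the calibration you suggest does not work either: with \(N = \Theta(\up(T))\) the error probability is a constant \(e^{-\Theta(1)}\), and driving it below any fixed \(\epsilon\) forces \(N = \Omega(\up(T)\log(1/\epsilon))\), while the claimed expected number of repetitions is \(\bigo(\up(T))\) with plain \(\bigo\). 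The paper instead uses a \emph{deterministic} device, in one of two forms: either run an exhaustive exploration of \(T\) in parallel with the repeated sampler, stopping whichever terminates first (the exploration certifies \(\onel = \emptyset\); a successful sample certifies nonemptiness); or make the sampler stateful, recording the explored parts of \(T\) and updating \(\up(t)\) from the children of each fully explored node, so that subtrees known to contain no \(1\)-leaf receive estimate \(0\) and are never re-entered — eventually either a \(1\)-leaf is found or \(\up(T)\) decreases to exactly \(0\), which is a certificate of emptiness. Both run in time \(\bigo(\up(T))\) and preserve the stated bounds; your proof needs one of these (or an equivalent certifying mechanism) in place of the threshold rule.
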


\begin{proof}
  We first treat the case where \(\size{\onel} = n > 0\). %
  \cref{alg:rosenbaum} can either fail or produce a leaf that has been
  sampled with uniform probability. It is a \emph{Las Vegas} algorithm. %
  It samples \onel\ of \(T\) with uniform probability
  \(\frac{1}{\up(T)}\). %
  When \(\size{\onel} = n > 0\), it thus outputs some data with probability
  \(\frac{n}{\up(T)}\) or fails with probability \(1 -
  \frac{n}{\up(T)}\). %
  Now if we repeat the algorithm until it succeeds, as each \onel\ has the
  same probability to be outputted in one repetition, they all have the
  same probability to be outputted at the end of this process. %
  In a nutshell, this procedure uniformly chooses amongst the \onel\ of
  \(T\) which all have probability \(\frac{1}{n}\) to be outputted. %
  Moreover, the number of repetitions of \cref{alg:rosenbaum} until it
  succeeds follows a geometric distribution and its expected value is thus
  \(\frac{\up(T)}{n}\). %

  The case where \(\size{\onel} = 0\) is a little trickier. %
  Since \cref{alg:rosenbaum} can only fail, repeating it would result in an
  infinite loop. %
  We can circumvent this in one of two ways. %

  Either we start a full exploration of \(T\) in parallel and if it does
  not find any \onel\ in \(T\), we stop running \cref{alg:rosenbaum} (if
  \cref{alg:rosenbaum} returns a 1-leaf, we stop the exploration). %
  A second method would be to improve \cref{alg:rosenbaum} by maintaining
  the parts of \(T\) that have already been explored and by updating the
  values of \(\up(t)\) for each subtree that is explored by using the
  information from its children. %
  Eventually, the algorithm explores \(T\) entirely. %
  Indeed parts of \(T\) that have been explored are known not to contain
  data and have thus probability 0 to be explored again. %
  In the end, updating the \(\up(t)\) value of each node will result in
  having \(\up(T) = 0\), meaning that \onel\ is empty. %
  We then stop the search. %
  Both methods would cost a time of \(\bigo(\up(T))\). %
\end{proof}

\begin{remark}
  Notice that the second method presented in the proof may also be useful
  when \(\size{\onel}\) is small compared to \(\up(T)\). %
  Indeed, updating \(\up(\cdot)\) at each failure of the algorithm increases
  the probability of success of the next iteration of the algorithm
  resulting in an overall improvement in the speed of convergence. %
\end{remark}

\subsection{Applying \cref{alg:rosenbaum} to join queries}

There is a strong link between the tree structure used in the
aforementioned sampling method and \cref{alg:wcj}. %
Assume that we want to sample uniformly the results of the query \(Q\) with
relations over variables \(X =\{x_1,\dots, x_n\}\) and domain \(D\). %
For this, we can follow the structure of the execution of \cref{alg:wcj}
when it uses the order \((x_1,\dots,x_n)\) on variables. %
The execution of \cref{alg:wcj} naturally constructs a tree structure whose
nodes are some assignments of \(D^{X_i}\) for some \(i\in[0,n]\)
corresponding to the input of recursive calls. %
We call this tree the \emph{trace tree of \(Q\)} and denote it by
\(T_Q\). %
In \(T_Q\), an assignment \(\tau \in D^{X_i}\) is the \emph{parent} of another
assignment \(\tau'\) when \(\tau'= \tau \cup \unittuple[x_{i+1}][d]\). %
Among all the possible assignments, the ones that are nodes in \(T_Q\) are
those that are consistent with \(Q\) or those that are inconsistent with
\(Q\) but have a parent that is consistent with \(Q\). %
\cref{fig:example_dpll} depicts such a tree: assignments that are
inconsistent with \(Q\) are labelled \(\bot\), those that are elements of
the answer set are labelled \(\top\). %
Moreover, the assignment corresponding to a node in \cref{fig:example_dpll}
can simply be read off the path from the root to that particular node. %

The leaves of \(T_Q\) that we want to sample in this tree are simply the
elements of \(D^{X_n}\) that are consistent with \(Q\), namely the
solutions of \(Q\). %
\cref{alg:rosenbaum} can then be applied to \(T_Q\). %
Thus we call \AP\intro{Q-estimator} a function \qup\ on the nodes of \(T_Q\)
that verifies: %
\begin{itemize}
\item when \(\tau\) is a node of \(T_Q\) in \(D^{X_i}\) that is consistent with
  \(Q\), \(\qup(\tau) \geq \sum_{d\in D} \qup(\tau \cup \unittuple[x_{i+1}][d]
  )\) (i.e. \qup\ is tree-superadditive), %
\item \(\qup(\tau) = 1\) when \(\tau \in \ans\)
\item \(\qup(\tau) = 0\) when \(\tau\) is inconsistent with \(Q\).
\end{itemize}

\begin{theorem}
  \label{th:tree-superadd-query}
  Given a \kl{\(Q\)-estimator} \(\qup(\tau)\) that can be evaluated in time
  \(t\) for every \(\tau\), it is possible to uniformly sample \ans\ with
  expected time \(\bigo(\frac{\qup(\emptytuple)}{\max(1,\size{\ans})} \cdot
  \size{X} \cdot \size{D} \cdot t)\). %
\end{theorem}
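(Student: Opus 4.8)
The plan is to instantiate \cref{cor:sampling_with_rosenbaum} with the trace tree \(T_Q\) of \cref{alg:wcj} as the rooted tree and the \kl{\(Q\)-estimator} \(\qup\) as the \kl{leaf estimator}. First I would check that \(\qup\) is a genuine leaf estimator for \(T_Q\). The internal nodes of \(T_Q\) are exactly the assignments \(\tau \in D^{X_i}\) with \(i < n\) that are \kl{consistent} with \(Q\), since \kl{inconsistent} assignments are never expanded and thus form leaves; on such a consistent internal node the children are precisely \(\{\tau \cup \unittuple[x_{i+1}][d] \mid d \in D\}\), so the first defining inequality of a \(Q\)-estimator is literally the tree-superadditivity condition required of a leaf estimator. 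The leaves of \(T_Q\) are either inconsistent assignments, on which \(\qup = 0\), or elements of \(\ans\) (consistent assignments in \(D^{X_n}\)), on which \(\qup = 1\); hence \(\qup\) maps every leaf into \(\{0,1\}\), and the \(1\)-leaves of \(T_Q\) are exactly the elements of \(\ans\).

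Having matched the hypotheses, I would read off the parameters: the root of \(T_Q\) is \(\emptytuple\), so \(\up(T_Q) = \qup(\emptytuple)\); every internal node branches on the \(\size{D}\) possible values of the next variable, so the branching size is \(B = \size{D}\); and since variables are assigned one at a time, \(\mathsf{depth}(T_Q) = \bigo(\size{X})\). Plugging these into \cref{cor:sampling_with_rosenbaum} gives uniform sampling of \(\ans\) when \(\size{\ans} > 0\), or a correct report that \(\ans\) is empty otherwise, with an expected \(\bigo(\frac{\qup(\emptytuple)}{\max(1,\size{\ans})} \cdot \size{D} \cdot \size{X})\) calls to \(\qup\) and \(\bigo(\frac{\qup(\emptytuple)}{\max(1,\size{\ans})} \cdot \size{X})\) calls to \children.

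It then remains to bound the per-call costs. Each evaluation of \(\qup\) costs \(t\) by hypothesis. A call to \(\children(\tau)\) for \(\tau \in D^{X_i}\) simply enumerates the \(\size{D}\) extensions \(\tau \cup \unittuple[x_{i+1}][d]\), costing \(\bigo(\size{D})\); note that we branch on all \(\size{D}\) values and let the inconsistent children be pruned automatically, since \(\qup\) assigns them value \(0\) and hence sampling probability \(0\), so no separate consistency filter is needed. Multiplying the call counts by these costs, the \(\qup\)-calls contribute \(\bigo(\frac{\qup(\emptytuple)}{\max(1,\size{\ans})} \cdot \size{D} \cdot \size{X} \cdot t)\) and the \children-calls contribute \(\bigo(\frac{\qup(\emptytuple)}{\max(1,\size{\ans})} \cdot \size{X} \cdot \size{D})\), which is dominated by the former, yielding the claimed total expected time.

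The only genuinely delicate point is the verification in the first paragraph that the tree-superadditivity condition required of a leaf estimator, which is an inequality at \emph{every} internal node, coincides with the inequality that a \(Q\)-estimator imposes only at \emph{consistent} nodes. This matches precisely because the internal nodes of \(T_Q\) are exactly the consistent non-final assignments, so there is no internal node at which the \(Q\)-estimator condition is silent; everything else is bookkeeping of the parameters \(\up(T_Q)\), \(B\), \(\mathsf{depth}(T_Q)\) and of the per-call costs, all of which are already packaged inside \cref{cor:sampling_with_rosenbaum}.
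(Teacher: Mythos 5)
Your proof is correct and takes essentially the same route as the paper, which likewise proves the theorem by applying \cref{cor:sampling_with_rosenbaum} to the trace tree \(T_Q\) with depth \(\size{X}\) and branching size \(\size{D}\). The only difference is that you spell out explicitly the (correct) verification that a \(Q\)-estimator is a leaf estimator for \(T_Q\) and that the \(1\)-leaves are exactly \(\ans\), details the paper leaves implicit since the \(Q\)-estimator definition is tailored to \(T_Q\).
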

\begin{proof}
  This is a consequence of \cref{cor:sampling_with_rosenbaum}. %
  The depth of \(T_Q\) is \(\size{X}\) and its branching size is
  \size{D}. %
\end{proof}

\subsection{Tree-superadditive worst-case bounds}
\label{ssec:wc-bounds}

It now remains to define \kl{\(Q\)-estimators} when \(Q\) belongs to
\CCard\ or to \CDeg. %
This will allow us to recover the sampling results from the
literature~\cite{dengJSH2023,kimAGMOUT2023,wangJSA2024} by using
\cref{th:tree-superadd-query} and the binarisation technique
\cref{sec:binarisation}. %
As the class \CDeg\ generalises the class \CCard, we could have only treated
the first case. %
We think however that the \CCard\ being simpler, it conveys more
intuition. %
Our main tool is a simple consequence of an inequality by
Friedgut~\cite{friedgut04}. %

\begin{lemma}[Friedgut, {\cite[Lemma 3.3]{friedgut04}}]
  \label{lem:friedgut_consequence}
  For every finite sets \(I\) and \(J\), every family of positive real numbers
  \((\omega_j)_{j\in J}\) so that \(\sum_{j\in J}\omega_j \geq 1\), and every family of
  positive real numbers \((a_{i,j})_{i\in I, j\in J}\), we have:
  \[\sum_{i\in I} \prod_{j\in J} a_{i,j}^{\omega_j} \leq \prod_{j\in J}\left( \sum_{i\in I} a_{i,j} \right)^{\omega_j}\enspace .\]
\end{lemma}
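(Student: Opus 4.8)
The plan is to recognise this as a weighted form of Hölder's inequality and to reduce it to the elementary weighted AM--GM inequality. The first step is to exploit homogeneity: for each \(j \in J\), replacing the column \((a_{i,j})_{i \in I}\) by \((a_{i,j}/c_j)_{i \in I}\) for a positive constant \(c_j\) multiplies both sides by the same factor \(\prod_{j} c_j^{-\omega_j}\), since the left-hand side is a sum of products \(\prod_j a_{i,j}^{\omega_j}\) and the right-hand side is \(\prod_j (\sum_i a_{i,j})^{\omega_j}\). Hence I may assume without loss of generality that every column sums to one, i.e. \(\sum_{i \in I} a_{i,j} = 1\) for all \(j \in J\). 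Under this normalisation the right-hand side equals \(\prod_j 1^{\omega_j} = 1\), so it remains only to prove \(\sum_{i \in I}\prod_{j \in J} a_{i,j}^{\omega_j} \leq 1\).

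The key observation, and the step that makes the hypothesis \(\sum_j \omega_j \geq 1\) (rather than the classical \(=1\)) usable, is that after normalisation each \(a_{i,j}\) lies in \((0,1]\). Writing \(s = \sum_{j} \omega_j \geq 1\) and \(\omega_j' = \omega_j / s\), the new weights satisfy \(\sum_j \omega_j' = 1\) and \(\omega_j' \leq \omega_j\). Because the base \(a_{i,j}\) is at most \(1\), raising it to a smaller exponent can only increase it, so \(a_{i,j}^{\omega_j} \leq a_{i,j}^{\omega_j'}\) for every \(i,j\), and therefore \(\prod_j a_{i,j}^{\omega_j} \leq \prod_j a_{i,j}^{\omega_j'}\).

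It then suffices to bound \(\sum_i \prod_j a_{i,j}^{\omega_j'}\), where the weights now sum to exactly one. Here I would apply the weighted AM--GM inequality pointwise: for each fixed \(i\), \(\prod_{j} a_{i,j}^{\omega_j'} \leq \sum_{j} \omega_j' a_{i,j}\). Summing over \(i \in I\) and exchanging the order of summation gives \(\sum_{i}\prod_{j} a_{i,j}^{\omega_j'} \leq \sum_{j}\omega_j' \sum_{i} a_{i,j} = \sum_{j}\omega_j' = 1\), where the middle equality uses the column normalisation. Chaining the two bounds yields \(\sum_i \prod_j a_{i,j}^{\omega_j} \leq 1\), which is exactly the normalised statement, and undoing the normalisation recovers the general inequality.

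The only genuinely delicate point is the treatment of \(\sum_j \omega_j > 1\): the classical generalised Hölder inequality is stated for weights summing to one, and the extra slack must be absorbed somewhere. The normalisation in the first paragraph is what makes this possible, since it forces \(a_{i,j} \leq 1\) and lets me shrink the exponents from \(\omega_j\) down to \(\omega_j'\) for free; without first normalising the columns this monotonicity step would fail for entries exceeding \(1\). Everything else is routine.
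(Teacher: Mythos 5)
Your proof is correct, and it takes a genuinely different route from the paper's. The paper does not prove the inequality from scratch: it derives it as a direct specialisation of Friedgut's Generalised Weighted Entropy Lemma~\cite[Lemma 3.3]{friedgut04}, instantiating the hypergraph there as \(X = I\) with singleton edges \(E = \{\{i\} \mid i \in I\}\), taking \(F_j = I\) and \(w_j(\{i\}) = a_{i,j}^{\omega_j}\) with exponents \(\alpha_j = \omega_j\), so that the entropy lemma's conclusion is literally the stated inequality; the hypothesis \(\sum_{j\in J} \omega_j \geq 1\) is passed through unchanged to the covering condition on the \(\alpha_l\). You instead give a self-contained elementary argument: normalise each column to sum to one (valid by the homogeneity you check explicitly), note that the normalised entries lie in \((0,1]\) so that shrinking the exponents from \(\omega_j\) to \(\omega_j' = \omega_j / \sum_{k} \omega_k\) only increases each product --- this is exactly where the slack \(\sum_j \omega_j \geq 1\) is absorbed, and your remark that this monotonicity step would fail without the prior normalisation is the right point to flag --- and conclude with pointwise weighted AM--GM and an exchange of summations. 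This is the standard proof of the generalised H\"older inequality, correctly adapted to weights summing to at least one rather than exactly one. As for what each approach buys: the paper's reduction is shorter on the page and makes the provenance of the bound explicit, but it leans on an external result that the introduction itself describes as the paper's only technical blackbox; your argument removes that dependency entirely at the cost of a few extra lines, which arguably fits the paper's stated goal of elementary, modular proofs even better.
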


\begin{proof}
  This is a consequence of the \emph{Generalised Weighted Entropy Lemma}
  in~\cite[Lemma 3.3]{friedgut04}. The statement of this lemma is depends
  on the following objects: %
  \begin{itemize}
  \item A hypergraph \(H=(X,E)\).
  \item A finite set \(L\).
  \item A family of subsets of \(X\) \((F_l)_{l\in L}\). %
    Let \(e_l = e\cap F_l\) for every \(e\in E\). And let \(E_l = \{e_l\mid e\in
    E\}\). %
  \item A family of weights \(W = (w_l)_{l\in L}\): \(w_l\) associates a positive
    real number to the elements of \(E_l\). %
  \item A family of positive real numbers \(A = (\alpha_{l})_{l\in L}\) so that for
    every \(x\in X\), \(\sum_{l \mid x\in F_l} \alpha_l\geq 1\). %
  \end{itemize}
  Then it states that:
  \[
    \sum_{e\in E} \prod_{l\in L} w_l(e_l) \leq \prod_{l\in L}\left( \sum_{e_l\in E_l} w_l(e_l)^{1/\alpha_l} \right)^{\alpha_l}
    \enspace.
  \]

  The statement of the lemma we want to prove is obtained by setting:
  \begin{itemize}
  \item \(X = I\), \(E = \{\{i\} \mid i \in I\}\),
  \item \(L = J\) and for every \(j\in J\), \(F_j = I\), as a consequence for
    every \(e\in E\) and \(j\in J\), \(e_j = e\), and thus \(E_j = E\). %
  \item For each \(j\in J\), we let \(w_j(\{i\}) = a_{i,j}^{\omega_j}\). %
  \item Finally, we let \(A = (\omega_j)_{j\in J}\). %
    As for every \(i\in I\) and \(j\in J\), we have that \(i\in F_j\), the
    hypothesis that \(A\) must satisfy is a consequence of the hypothesis
    \(\sum_{j\in J}\omega_j \geq 1\). %
  \end{itemize}
  In this setting, the \emph{Generalised Weighted Entropy Lemma} gives us:
  \[
    \sum_{i\in I} \prod_{j\in J} a_{i,j}^{\omega_j} \leq %
    \prod_{j\in J}\left( \sum_{i\in I} (a_{i,j}^{\omega_j})^{1/\omega_j}
    \right)^{\omega_j} = %
    \prod_{j\in J}\left( \sum_{i\in I} a_{i,j} \right)^{\omega_j} \enspace .
  \]

  Which is the expected inequality.
\end{proof}

\subparagraph*{Cardinality constraints.}

Let \(\CCard\) be a class of join queries defined for hypergraph
\(H=(X,E)\) (with \(X = \{x_1,\dots,x_n\}\) and \(E = \{e_1,\dots,e_m\}\))
and cardinality constraints \(\tup[N] \subseteq \N^E\). %
In \cite{atseriasSB2013}, Atserias, Grohe and Marx show that \(\wc(\CCard)\)
can be computed from the solutions of the following linear program: %
\begin{align*}
  \label{eq:fec}
  \mathsf{min} \sum_{j}^m \omega_j\log(N(e_j))\  s. t., \forall i = 1,\dots,n & \sum_{j : x_i \in E_j} \omega_j \geq 1
\end{align*}
and prove \(\wc(\CCard)\) is \(\prod_{j=1}^m N(e_j)^{\omega_j}\) up to polylogarithmic factors. %
The vector \(\omega\) is called a \emph{fractional cover} of \(H\). %

Let us fix a fractional cover \(\omega\) of \(H\), for a query \(Q\) in
\(\CCard\). %
For every \(j \in [1,m]\), we let \(R_j \in Q\) be a relation such that
\(\size{R_j}\leq N(e_j)\) (which exists by definition of \(\CCard\)). %
We take as \kl{\(Q\)-estimator} \(\agmup(\tau)\) as follows: %
\[
  \agmup(\tau) =
  \left\{
    \begin{array}{ll}
      0%
      & \text{when \(\tau\) is inconsistent with } Q\\%
      \prod_{j=1}^{m} \size{R_j[\tau]}^{\omega_j}%
      & \text{otherwise} %
    \end{array}
  \right.
  \enspace.
\]

When \(\tau\) is inconsistent with \(Q\), then, by definition, \(\agmup(\tau) =
0\). %
Furthermore, when \(\tau\) is in \ans, for every \(j\), \(\size{R_j[\tau]}=1\) and
therefore, \(\agmup(\tau) =1\). %
To show that \(\agmup\) is a \kl{\(Q\)-estimator}, we finally need to show
that for every \(\tau\) that is consistent with \(Q\), we have
\(\agmup(\tau) \geq \sum_{d\in D} \agmup(\tau \cup \unittuple[x_{i+1}][d]
)\). %

For every \(j\) we have:
\begin{itemize}
\item \(\size{R_j[\tau]} = \sum_{d\in D} \Size{R_j[\tau \cup
    \unittuple[x_{i+1}][d]]}\) when \(x_{i+1}\) is in \(X_{R_j} - X_i\)
\item \(\size{R_j[\tau]} = \size{R_j[\tau \cup \unittuple[x_{i+1}][d]]}\) for
  every \(d \in D\) otherwise.
\end{itemize}
We let \(K = \{k \in [1,m] \mid x_{i+1} \in X_{R_j} - X_i\}\) and \(L = [1,m]\setminus K\).
Since \(\bigcup E = X\), we must have \(K \neq \emptyset\). %
We thus have: %
\begin{align*}
  \sum_{d\in D} \agmup(\tau \cup \unittuple[x_{i+1}][d])
  &\,= \sum_{d\in D} \prod_{k\in K} \size{R_k[\tau \cup \unittuple[x_{i+1}][d]]}^{\omega_k} \times \prod_{l\in L}\size{R_l[\tau \cup \unittuple[x_{i+1}][d]]}^{\omega_l}\\
  &\,= \sum_{d\in D} \prod_{k\in K} \size{R_k[\tau \cup \unittuple[x_{i+1}][d]]}^{\omega_k} \times \prod_{l\in L}\size{R_l[\tau]}^{\omega_l}\\
  &\,=  \prod_{l\in L}\size{R_l[\tau]}^{\omega_l} \times \sum_{d\in D} \prod_{k\in K} \size{R_k[\tau \cup \unittuple[x_{i+1}][d]]}^{\omega_k}
\end{align*}
Then \(\agmup(\tau) \geq \sum_{d\in D} \agmup(\tau \cup \unittuple[x_{i+1}][d] )\) follows
from: %

\[
  \sum_{d\in D}\prod_{k\in K}\Size{R_k[\tau \cup \unittuple[x_{i+1}][d]]}^{\omega_k} \leq \prod_{k\in
    K}\left(\sum_{d\in D} \Size{R_k[\tau \cup
      \unittuple[x_{i+1}][d]}\right)^{\omega_k} \enspace .
\]

By definition of \(K\) and since \((\omega_j)\) is a fractional cover of \(H\),
we have \(\sum_{k \in K} \omega_k \geq 1\). %
Hence we can directly get the bound using
\cref{lem:friedgut_consequence}. %

\begin{theorem}
 \label{th:agm_sampling}
 Given \(\CCard\) a class of join queries defined for hypergraph
 \(H=(X,E)\) and cardinality constraints \(\tup[N] \subseteq \N^E\), for
 every query \(Q\) in \CCard, it is possible to uniformly sample \ans\ with
 expected time \(\bigot(\frac{\wc(\CCard)}{\max(1,\size{\ans})} \cdot
 \size{X} \cdot \log(\size{D}) \cdot \size{E})\). %
\end{theorem}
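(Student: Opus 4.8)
The plan is to obtain \cref{th:agm_sampling} by feeding the estimator \(\agmup\) into the sampling machinery of \cref{th:tree-superadd-query}, but applied to the \emph{binarised} query \(\bin[Q]\) rather than to \(Q\) itself, so that the \(\size{D}\) factor of \cref{th:tree-superadd-query} is replaced by \(\log\size{D}\) exactly as in \cref{sec:binarisation}. First I would fix the fractional cover \(\omega\) given by the optimal solution of the linear program recalled above, so that \(\prod_{j=1}^m N(e_j)^{\omega_j} = \bigot(\wc(\CCard))\) by the AGM bound.

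The key observation is that \(\agmup\) lifts to \(\bin[Q]\) almost for free. Indeed \(\bin[Q]\) is a \kl{join query} on domain \(\{0,1\}\) with \(n b\) variables, where \(b = \lceil \log\size{D}\rceil\), and its relations \(\bin[R_j]\) satisfy \(\size{\bin[R_j]} = \size{R_j} \leq N(e_j)\); hence \(\bin[Q]\) belongs to a class defined by cardinality constraints on the binarised hypergraph \(\bin[H]\). Crucially, \(\omega\) remains a fractional cover of \(\bin[H]\), because a bit variable \(x^i\) lies in \(\bin[e_j]\) precisely when \(x\) lies in \(e_j\), so \(\sum_{j : x^i \in \bin[e_j]} \omega_j = \sum_{j : x \in e_j}\omega_j \geq 1\). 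Consequently the construction of \(\agmup\) already carried out applies verbatim to \(\bin[Q]\): setting \(\agmup(\tau) = \prod_{j} \size{\bin[R_j][b][\tau]}^{\omega_j}\) (and \(0\) when \(\tau\) is inconsistent) yields a valid estimator for the trace tree of \(\bin[Q]\) in the sense required by \cref{th:tree-superadd-query}. The only point that genuinely needs rechecking is tree-superadditivity at the finer, single-bit branching: when the next variable is the \(q\)-th bit \(x_p^q\) of \(x_p\), the relations affected are exactly those in \(K = \{j \mid x_p \in e_j\}\), and since \(\bigcup E = X\) forces \(K \neq \emptyset\) with \(\sum_{k\in K}\omega_k \geq 1\), the same decomposition over \(d \in \{0,1\}\) followed by \cref{lem:friedgut_consequence} gives \(\agmup(\tau) \geq \sum_{d} \agmup(\tau\cup\unittuple[x_p^q][d])\).

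It then remains to read off the complexity. At the root, \(\agmup(\emptytuple) = \prod_j \size{R_j}^{\omega_j} \leq \prod_j N(e_j)^{\omega_j} = \bigot(\wc(\CCard))\). Each evaluation of \(\agmup(\tau)\) needs the \(m = \size{E}\) values \(\size{\bin[R_j][b][\tau]}\), which, using relations stored in sorted order with the pointer representation of \cref{sec:dpll}, cost \(\bigot(1)\) each and thus \(t = \bigot(\size{E})\) in total. Plugging \(\bin[Q]\) into \cref{th:tree-superadd-query}, whose number of variables is now \(n b = \bigot(\size{X}\log\size{D})\) and whose domain has size \(2\), gives expected time \(\bigo(\frac{\agmup(\emptytuple)}{\max(1,\size{\ans})} \cdot n b \cdot 2 \cdot t)\), that is, \(\bigot(\frac{\wc(\CCard)}{\max(1,\size{\ans})} \cdot \size{X} \cdot \log\size{D} \cdot \size{E})\). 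Since the answers of \(\bin[Q]\) are in bijection with those of \(Q\) and each is decodable in \(\bigot(1)\), and since uniformity is preserved by this bijection, this is exactly a uniform sampler of \(\ans\) with the announced runtime.

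The main obstacle I expect is the middle step: ensuring that plugging \(\agmup\) into the refined binarised trace tree preserves tree-superadditivity. This is not automatic, but it works because the fractional-cover inequality is a \emph{per-variable} condition, so every bit \(x^i\) inherits the covering weight of its variable \(x\), keeping \(\sum_{k\in K}\omega_k \geq 1\) at each bit-level branch. The only external input is the AGM bound, used solely to identify \(\prod_j N(e_j)^{\omega_j}\) with \(\bigot(\wc(\CCard))\).
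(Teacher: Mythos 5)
Your proposal is correct and follows essentially the same route as the paper: binarise \(Q\), observe that \(\bin[Q]\) lies in a binarised class of cardinality-constrained queries whose worst case does not exceed \(\wc(\CCard)\), lift \(\agmup\) to \(\bin[Q]\), and feed it into \cref{th:tree-superadd-query} with domain size \(2\) and \(nb\) variables. The one step you recheck by hand --- that \(\omega\) remains a fractional cover of \(\bin[H]\) so tree-superadditivity survives bit-level branching --- is exactly what the paper delegates to \cref{sec:binarisation} (via \cref{lem:binariseDC}, of which cardinality constraints are the special case \(A=\emptyset\)), so this is a welcome explicitation rather than a divergence.
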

\begin{proof}
  Given \(Q\) with active domain \(D\), we let \(b = \lceil \log(\size{D})\rceil\)
  and we are going to sample \(\bin[Q]\). %
  As we have seen in \cref{sec:binarisation}, \ans\ and \(\ans[\bin[Q]]\)
  can be considered to be the same set. %
  Moreover \(\bin[Q]\) belongs to a class \(\bin[\C]\) defined by
  cardinality constraints where \(\wc(\bin[\C]) \leq \wc(\CCard)\). %
  Therefore, there is a \(\bin[Q]\)-estimator \(\agmup(\cdot)\). %
  Using the data structure described in \cref{sec:dpll} to represent
  \(R[\tau]\) or a trie structure annotated with cardinalities to represent
  every relation of \(Q\), we can compute \(\Size{R_k[\tau \cup
    \unittuple[x_{i+1}]]}\) in constant time. %
  Therefore, computing \(\agmup(\cdot)\) takes time \(\bigot(|E|)\). %
  Finally, by definition of \(\agmup(\cdot)\), we have \(\agmup(\emptytuple) \leq
  \wc(\bin[\C])\). %
  This allows us to apply \cref{th:tree-superadd-query} and yields the
  claimed complexity. %
\end{proof}

\subparagraph*{Polymatroid.}

Let \(\CDeg\) be a class of join queries defined for hypergraph \(H=(X,E)\)
(with \(X = \{x_1,\dots,x_n\}\) and \(E = \{e_1,\dots,e_m\}\)) and acyclic
degree constraints \((A_\delta, B_\delta, N_\delta)_{\delta \in DC}\). %
The polymatroid bound is a generalisation of the AGM bound that can be
formulated on acyclic degree constraints with the solutions of the
following linear program: %

\begin{equation}
  \label{eq:poly}
  \mathsf{min} \sum_{\delta \in DC} \omega_\delta \log(N_\delta) \quad
  s. t., \forall x \in X,\, \sum_{\delta : x \in B_\delta \setminus A_\delta} \omega_\delta \geq 1\enspace .
\end{equation}

For this program to have a solution, we need to assume that for every \(x \in
X\), there is at least one constraint \(\delta\) such that \(x \in B_\delta
\setminus A_\delta\). %
As stated in \cref{sec:preliminaries}, to ensure that acyclic degree
constraints induce a finite worst case, we generally assume that \(\bigcup
E = X\) and that for each \(e \in E\), we have at least one cardinality
constraint \((\emptyset, e, N_e)\). %
We assume this condition to be met here. %
Now for any solution \(\omega_\delta\) of the previous program, it has been
shown~\cite{ngoWCOJ2018} that \(\wc(\CDeg)\) is \(\prod_{\delta \in DC}
N_\delta^{\omega_\delta}\) up to some polylogarithmic factors. %

Let us fix a solution \(\omega\) of \eqref{eq:poly}, and an order
\((x_1,\dots,x_n)\) on \(X\) compatible with \(DC\). %
For a query \(Q\) in \CDeg\ and \(\delta=(A_\delta, B_\delta, N_\delta)\) in \(DC\), we let
\(R_\delta\) to be a relation in \(Q\) that respects \(\delta\) and such
that \(X_{R_\delta} = e_\delta\) (in other words, the relation guards this
constraint). %
To lighten notations, we denote by \(R'_\delta = R_\delta|_{B_\delta}\), since the degree
constraint is applied to this projection of \(R_\delta\) and not
\(R_\delta\) itself. %
Moreover, given \(\tau\) in \(D^{X_i}\), we let:
\[
  N_\delta[\tau] = %
  \left\{
    \begin{array}{ll}
      0&\text{if \(\tau\) is inconsistent with } Q\\
      N_\delta & \text{if } A_\delta \setminus X_i \neq \emptyset\\
      \size{R'_\delta[\tau]}& \text{otherwise}
    \end{array}
  \right.
\]

We now take as \kl{\(Q\)-estimator} \(\pmup(\tau)\) as follows:
\[
  \pmup(\tau) = \prod_{\delta\in DC} N_\delta[\tau]^{\omega_\delta} \enspace.
\]

When \(\tau\) is inconsistent with \(Q\), by definition of \(N_\delta[\tau]\),
\(\pmup(\tau) = 0\). %
If \(\tau\) is in \ans, since for every \(\delta\), \(A_\delta\setminus X_n = \emptyset\),
\(N_\delta[\tau] = \size{R'_j[\tau]}=1\) and therefore, \(\pmup(\tau)
=1\). %
Proving that \(\pmup\) is a \(Q\)-estimator finally requires to show that
for every \(\tau\) that is consistent with \(Q\), we have \(\pmup(\tau)
\geq \sum_{d\in D} \pmup(\tau \cup \unittuple[x_{i+1}][d] )\). %

We let \(K = \{\delta \in DC \mid x_{i+1} \in B_\delta \setminus A_\delta\}\) and \(L = DC\setminus
K\). %
As stated before, for the linear program to have a solution, we assumed \(K
\neq \emptyset\). %
We make two observations:
\begin{enumerate}
\item For \(\delta \in K\), \(N_\delta[\tau] = \size{R'_\delta[\tau]}\) and for any \(d \in D\), \(N_\delta[\tau \cup
  \unittuple[x_{i+1}]] = \size{R'_\delta[\tau \cup \unittuple[x_{i+1}]]}\),
\item For \(d \in D\), \(N_\delta[\tau \cup \unittuple[x_{i+1}]] \leq N_\delta[\tau]\).
\end{enumerate}

The first observation follows directly from the facts that \(x_{i+1} \in
B_\delta \setminus A_\delta\) by definition of \(K\) and that
\(x_1,\dots,x_n\) is compatible with \(DC\) which implies that for any
\(\delta \in K\), \(A_\delta \subseteq X_i\). %
For the second inequality, first assume \(A_\delta \setminus X_{i+1} \neq
\emptyset\). %
Then both sides are equal to \(N_\delta\). %
Now assume \(A_\delta \setminus X_{i+1}=\emptyset\). %
Then \(N_{\delta}[\tau \cup \unittuple[x_{i+1}]] = \size{R'_\delta[\tau \cup \unittuple[x_{i+1}]]}\)
and either \(N_\delta[\tau] = \size{R'_\delta[\tau]}\), then the equality
is clear, or \(N_\delta[\tau]=N_\delta\) and the inequality follows from
the fact that \(R_\delta\) respects \(\delta\) by definition. %
Then we obtain:
\begin{align*}
  \sum_{d\in D} \pmup(\tau \cup \unittuple[x_{i+1}][d])
  &\,= \sum_{d\in D} \prod_{\delta\in K} \size{R'_\delta[\tau \cup \unittuple[x_{i+1}][d]]}^{\omega_\delta} \times \prod_{\delta\in L}N_\delta[\tau\cup\unittuple[x_{i+1}][d]]^{\omega_\delta}\\
  &\,\leq  \sum_{d\in D} \prod_{\delta\in K} \size{R'_\delta[\tau \cup \unittuple[x_{i+1}][d]]}^{\omega_\delta} \times \prod_{\delta\in L}N_\delta[\tau]^{\omega_\delta}\\
  &\,= \prod_{\delta\in L}N_\delta[\tau]^{\omega_\delta} \times \sum_{d\in D} \prod_{\delta\in K} \size{R'_\delta[\tau \cup \unittuple[x_{i+1}][d]]}^{\omega_\delta}\\
\end{align*}

We hence get \(\sum_{d\in D} \pmup(\tau \cup \unittuple[x_{i+1}][d]) \leq \pmup(\tau)\) by
showing that: %

\[
  \sum_{d\in D}\prod_{\delta\in K}\Size{R'_\delta[\tau\cup \unittuple[x_{i+1}][d]]}^{\omega_\delta} \leq \prod_{\delta\in
    K}\left(\sum_{d\in D} \Size{R'_\delta[\tau \cup
      \unittuple[x_{i+1}][d]}\right)^{\omega_\delta} = \prod_{\delta\in K}
  R'_{\delta}[\tau]^{\omega_\delta} = \prod_{\delta\in K}
  N_{\delta}[\tau]^{\omega_\delta} \enspace .
\]

The above inequality is a consequence of \cref{lem:friedgut_consequence}. %
Indeed, \(\sum_{\delta \in K}\omega_\delta \geq 1\) by definition of \(K\) and because of the
constraints \(\omega_\delta\) verifies in the linear program. %
We then conclude with this result, obtained exactly in the same way as
\cref{th:agm_sampling}. %

\begin{theorem}
 \label{th:pm_sampling}
 Given \(\CDeg\) a class of join queries defined for hypergraph \(H=(X,E)\)
 and acyclic degree constraints \(DC\), for every query \(Q\) in \CDeg, it
 is possible to uniformly sample \ans\ with expected time
 \(\bigo(\frac{\wc(\CDeg)}{\max(1,\size{\ans})} \cdot \size{X} \cdot
 \log(\size{D}) \cdot \size{E})\).
\end{theorem}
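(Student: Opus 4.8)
The plan is to run the proof of \cref{th:agm_sampling} almost verbatim, since the only genuinely technical ingredient---that \(\pmup\) is \emph{tree-superadditive}, and hence a legitimate \kl{\(Q\)-estimator}---has already been established just above via \cref{lem:friedgut_consequence}. The remaining work is purely to convert this estimator into a sampling complexity bound through \cref{th:tree-superadd-query}, and to kill the \(\size{D}\) factor that this theorem would otherwise introduce. Concretely, given \(Q \in \CDeg\) with active domain \(D\), I would not sample \(Q\) directly but its binarisation \(\bin[Q]\) with \(b = \lceil \log(\size{D})\rceil\).

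First I would invoke \cref{lem:binariseDC}: it guarantees that \(\bin[Q]\) lies in the class \(\CDeg[\bin[DC]][\bin[H]]\), that the binarised constraint set \(\bin[DC]\) is again acyclic, that the order \(x_1^1,\dots,x_n^b\) is compatible with it, and, crucially, that \(\wc(\CDeg[\bin[DC]][\bin[H]]) \leq \wc(\CDeg)\). Because the binarised class is again defined by acyclic degree constraints with a compatible order, the construction of \(\pmup\) carried out above applies to \(\bin[Q]\) and yields a \(\bin[Q]\)-estimator for its trace tree.

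Next I would control the two quantities that enter \cref{th:tree-superadd-query}. For the value at the root, since \(N_\delta[\emptytuple] \leq N_\delta\) for every \(\delta\) (a relation guarding a cardinality constraint has at most \(N_\delta\) tuples, and a constraint with nonempty antecedent contributes exactly \(N_\delta\)), we get \(\pmup(\emptytuple) \leq \prod_{\delta \in DC} N_\delta^{\omega_\delta}\), which by the polymatroid bound~\cite{ngoWCOJ2018} is, up to polylogarithmic factors, at most \(\wc(\CDeg[\bin[DC]][\bin[H]]) \leq \wc(\CDeg)\). For the per-call cost, I would represent each \(R'_\delta\) with the sorted-array data structure of \cref{sec:dpll} (or a trie annotated with cardinalities), so that every factor \(\Size{R'_\delta[\tau \cup \unittuple[x_{i+1}][d]]}\) is read off in constant time and a single evaluation of \(\pmup\) costs \(t = \bigot(\size{E})\).

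Finally I would feed these into \cref{th:tree-superadd-query}. The trace tree of \(\bin[Q]\) has branching size \(\size{D} = 2\) and depth \(\size{X}\cdot b = \bigot(\size{X}\log(\size{D}))\), so the expected running time is \(\bigo\big(\frac{\pmup(\emptytuple)}{\max(1,\size{\ans})} \cdot \size{X}\log(\size{D}) \cdot t\big)\); substituting the bounds above gives exactly the claimed \(\bigo(\frac{\wc(\CDeg)}{\max(1,\size{\ans})} \cdot \size{X} \cdot \log(\size{D}) \cdot \size{E})\). The one step that deserves care is the bookkeeping of \cref{lem:binariseDC}: one must check that binarisation preserves acyclicity and the compatibility of the order while not inflating the worst case, since it is precisely this that replaces the prohibitive \(\size{D}\) branching factor of \cref{th:tree-superadd-query} by the harmless \(\log(\size{D})\) factor.
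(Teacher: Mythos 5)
Your proposal is correct and follows essentially the same route as the paper, whose own proof of \cref{th:pm_sampling} is simply ``obtained exactly in the same way as \cref{th:agm_sampling}'': establish that \(\pmup\) is a \(Q\)-estimator via \cref{lem:friedgut_consequence}, binarise through \cref{lem:binariseDC} (preserving acyclicity, order compatibility, and the worst case), evaluate the estimator in \(\bigot(\size{E})\) per call with the data structure of \cref{sec:dpll}, and apply \cref{th:tree-superadd-query} with branching size \(2\) and depth \(\size{X}\lceil\log\size{D}\rceil\). Your explicit verification that \(\pmup(\emptytuple) \leq \prod_{\delta \in DC} N_\delta^{\omega_\delta}\) is, if anything, slightly more detailed than the paper's own write-up.
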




\newpage

\bibliography{biblio}


\end{document}